\newcommand{\ol}{\overline}
\newcommand{\cM}{\mathcal{M}}
\newcommand{\cL}{\mathcal{L}}
\newcommand{\fS}{\mathfrak{S}}
\newcommand{\fsu}{\mathfrak{su}}
\newcommand{\bR}{\mathbb{R}}
\newcommand{\bC}{\mathbb{C}}
\newcommand{\bP}{\mathbb{P}}
\newcommand{\tr}{\operatorname{tr}}
\newcommand{\Lie}{{\mathrm{Lie}}}
\newcommand{\YM}{\mathrm{YM}}
\newcommand{\pt}{\mathrm{pt}}
\newcommand{\PT}{\mathrm{PT}}
\newcommand{\Conf}{\mathrm{Conf}}
\newcommand{\Sh}{\mathrm{Sh}}
\newcommand{\Thin}{\mathrm{Thin}}
\newcommand{\Res}{\mathrm{Res}}
\newcommand{\PSL}{\mathrm{PSL}}
\newcommand{\SL}{\mathrm{SL}}
\newcommand{\GL}{\mathrm{GL}}
\renewcommand{\d}{\text{d}}
\newcommand{\pa}{\partial}
\newcommand{\avg}[1]{\left< #1 \right>} 	
\numberwithin{equation}{section}
\newtheorem{Theorem}{Theorem}[section]
\newtheorem{Lemma}[Theorem]{Lemma}
\newtheorem{Proposition}[Theorem]{Proposition}
 { \theoremstyle{definition}

\newtheorem{Remark}[Theorem]{Remark} }
\begin{document}
\allowdisplaybreaks

\renewcommand{\thefootnote}{}

\renewcommand{\PaperNumber}{101}

\FirstPageHeading

\ShortArticleName{The Algebraic Structure of the KLT Relations for Gauge and Gravity Tree Amplitudes}

\ArticleName{The Algebraic Structure of the KLT Relations\\ for Gauge and Gravity Tree Amplitudes\footnote{This paper is a~contribution to the Special Issue on Algebraic Structures in Perturbative Quantum Field Theory in honor of Dirk Kreimer for his 60th birthday. The~full collection is available at \href{https://www.emis.de/journals/SIGMA/Kreimer.html}{https://www.emis.de/journals/SIGMA/Kreimer.html}}}

\Author{Hadleigh FROST}

\AuthorNameForHeading{H.~Frost}

\Address{Mathematical Institute, University of Oxford, Oxford, UK}
\Email{\href{mailto:frost@maths.ox.ac.uk}{frost@maths.ox.ac.uk}}

\ArticleDates{Received March 01, 2021, in final form November 01, 2021; Published online November 14, 2021}

\Abstract{We study the Kawai--Lewellen--Tye (KLT) relations for quantum field theory by reformulating it as an isomorphism between two Lie algebras. We also show how explicit formulas for KLT relations arise when studying rational functions on ${\mathcal M}_{0,n}$, and prove identities that allow for arbitrary rational functions to be expanded in any given basis. Via the Cachazo--He--Yuan formulas for, these identities also lead to new formulas for gauge and gravity tree amplitudes, including formulas for so-called Bern--Carrasco--Johansson numerators, in the case of non-linear sigma model and maximal-helicity-violating Yang--Mills amplitudes.}

\Keywords{perturbative gauge theory; double copy; binary trees; Lie coalgebras; Lie polynomials}

\Classification{05C05; 17B62; 81T13; 81T30}

\renewcommand{\thefootnote}{\arabic{footnote}}
\setcounter{footnote}{0}

\section{Introduction}\label{sec:1}
The \emph{Kawai--Lewellen--Tye $($KLT$)$ relations} for field theory amplitudes express the tree amplitude of a gravity theory as a quadratic expression in the tree amplitudes of a gauge theory. The existence of such a relation is motivated by the \emph{string KLT relation} between open and closed string tree amplitudes. These relations were proposed in \cite{1985KLT}, and explicit formulas for the components of the KLT matrix were derived more recently in \cite{2010BBDSV}. The KLT relation for string amplitudes resembles the Riemann period relations, because the quadratic relation is obtained by taking the inverse of a matrix of the intersection numbers (in a certain homology theory) of~$\cM_{0,n}$, the moduli space of~$n$ points in~$\bP^1$. The KLT relations are studied from this point of view in~\cite{2018BD,2017Miz}. There has been work done to understand the field theory KLT relation in similar terms \cite{2018Miz,2019Miz}.

This article studies the field theory KLT relation directly, without reference to $\cM_{0,n}$, to find a~simple algebraic interpretation of the KLT matrix. The relationship to $\cM_{0,n}$ is then exploited to find new formulas for some gauge theory tree amplitudes. The formulas obtained are in a~special form, which manifests the numerators of Bern--Carrasco--Johansson (BCJ) \cite{2008BCJ}. The rest of this introduction gives a brief review of the field theory KLT relations, and then summarizes the results of the paper.

\subsection{Partial amplitudes and the field theory KLT relations}
The \emph{colour factors} of a gauge theory amplitude organize the amplitude into \emph{partial amplitudes} corresponding to surfaces with boundary marked points. For an ${\rm SU}(N)$ gauge theory, with coupling constant $g_\YM$, the full amplitude may be written as
\begin{gather}\label{eqn:p2}
A_{n} = ({\rm i}g_\YM)^{n-2}\sum_{\ell=0}^{\infty} \lambda^\ell \sum_{h,p,g} \left(\frac{1}{N}\right)^{2g+h-1} A_{p,g,h},
\end{gather}
where $\lambda = g_\YM^2N$ is the 't Hooft coupling \cite{1973H}, and where $A_{p,g,h}$ is the sum of partial amplitudes corresponding to surfaces of genus $g$, with $p$ punctures and $h$ boundary components. The second sum in~\eqref{eqn:p2} is constrained by the relation $p+2g+h = \ell+1$.

Consider Yang--Mills (YM) gauge theory, with gluons in the matrix representation of $\fsu(N)$. Fix some distinct $\lambda_1,\dots ,\lambda_n \in \fsu(N)$, corresponding to the external gluon colour states. A cubic Feynman tree diagram, $\alpha$, gives a contribution to the tree amplitude which is proportional to
\begin{gather}\label{eq:colourfactordef}
c_\alpha := \tr(\alpha[\lambda_1,\dots,\lambda_{n-1}] \lambda_n),
\end{gather}
where $\tr$ is the Killing form, and $\alpha[\lambda_1,\dots,\lambda_{n-1}]$ is the Lie bracketing of the $\lambda_i$ according to the tree $\alpha$, regarded as a rooted binary tree, with root $n$.\footnote{A binary rooted tree with $k$ leaves determines and is determined by a Lie bracketing of~$k$ variables, up to a~sign.} A Feynman tree diagram that contains quartic vertices gives a contribution which is a sum of terms, each proportional to $c_\alpha$ for some binary rooted tree~$\alpha$.\footnote{The cubic diagrams that appear in this sum can be found by considering all the possible ways to `expand' the quartic vertices into a subdiagram with two cubic vertices.} It is therefore possible to write the $n$-point tree amplitude as
\begin{gather*}
A_\YM^{\text{tree}} = \sum_{\substack{\text{trees} \\ \alpha}} A_\alpha c_\alpha,
\end{gather*}
where the sum is over all binary trees $\alpha$, with external edges labelled by $1, \dots , n$. The coefficients~$A_\alpha$ that appear in the sum depend only on the gluon momenta and polarizations. The colour factors $c_\alpha$ may be expanded as a sum over permutations:
\begin{gather}\label{eq:CA}
c_\alpha = \sum_{a\in\fS_{n-1}} (a,\alpha) \tr (\lambda_{a(1)}\dots \lambda_{a(n-1)}\lambda_{n}),
\end{gather}
where $\fS_{n-1}$ is the set of permutations. The bracketing $(a,\alpha)$ denotes the coefficient of the ordering $a$ in the expansion of the monomial $\alpha$, which can be either $+1$, $-1$, or $0$. The partial amplitude expansion of $A_\YM^{\text{tree}}$ can therefore also be written as a sum over permutations:
\begin{gather*}
A_\YM^{\text{tree}} = \sum_{a\in\fS_{n-1}} A(a,n) \tr (\lambda_{a(1)}\dots \lambda_{a(n-1)}\lambda_{n}),
\end{gather*}
where $A(a,n)$ is the sum over binary trees,
\begin{gather*}
A(a,n) = \sum_{\substack{\text{trees} \\ \alpha}} (a,\alpha) A_\alpha.
\end{gather*}
Given the gauge theory partial tree amplitudes, the \emph{field theory KLT relation} expresses the $n$-point gravity tree amplitude~$M_n$ as a quadratic expression of the following form \cite{2010BBDSV,2009BBDV}
\begin{gather}\label{eq:kltrelation}
M_n = \lim_{k_n^2\rightarrow 0} \sum_{a,b \in \fS_{n-2}} \frac{A(1an) S(1a,1b) A\big(\overline{b}1n\big)}{k_n^2},
\end{gather}
where the matrix entries $S(1a,1b)$ depend only on the gluon momenta, $k_i^\mu$. It is understood that, $S(1a,1b)$ is defined `off-shell', in the sense that it is valid for $k_n^2 \neq 0$. When $k_n^2 \neq 0$, $S(1a,1b)$ is a $(n-2)!\times (n-2)!$ matrix of full rank (as explained in Section~\ref{sec:2}). Its rank drops by $1$ when $k_n^2 = 0$. The explicit formula for the entries of $S(1a,1b)$ is given by~\cite{2010BBDSV}
\begin{gather}\label{eq:kltkernel}
S(1a,1b) := \prod_{i=2}^{n-1} \left( \sum_{\substack{j<_{1a} i\\ j>_{1b} i}} s_{ij}\right),
\end{gather}
where the variables
\begin{gather*}
s_{ij} = 2 k_i \cdot k_j
\end{gather*}
are the Mandelstam variables associated to the gluon momenta. For fixed $i$, the sum in this formula is over all $j$ that both precede $i$ in $1a$, and are preceded by $i$ in $1b$. The notation $\overline b$ denotes the reversal of the ordering~$b$.

This formula, \eqref{eq:kltrelation}, was originally derived in \cite{2010BBDSV} using an argument from the properties of the open string integral. \cite{2020FrostMM} showed that the `off-shell' KLT matrix, \eqref{eq:kltkernel}, is the matrix inverse of a~matrix of `Berends--Giele' currents for bi-adjoint scalar theory. Namely, define a~$(n-2)!\times (n-2)!$ matrix, $T$:
\begin{gather*}
T(1a,1b) = \sum_{\substack{\text{trees} \\ \alpha}} \frac{(1a,\alpha)(1b,\alpha)}{k_n^2 s_\alpha},
\end{gather*}
where $a$ and $b$ are permutations of $2, \dots , n-1$, and the sum is over all binary trees. The brackets $(1a,\alpha)$ are defined as in~\eqref{eq:CA}, above. The denominator, $s_\alpha$, is the \emph{product of propagators}.\footnote{$s_\alpha$ is a polynomial expression in Mandelstam variables, determined by the tree $\alpha$; see~\eqref{eq:salpha}.} Then, thinking of $S(1a,1b)$ as an $(n-2)!\times (n-2)!$ matrix, $S$:
\begin{Proposition}
The matrices $S$ and $T$ are inverse: $ST = TS = {\rm Id}$, i.e.,
\begin{gather}\label{eq:introST}
\sum_{b \in \fS_{n-2}} T(1a,1b) S(1b, 1c) = \sum_{b \in \fS_{n-2}} S(1a,1b) T(1b, 1c) = \delta_{b,c},
\end{gather}
where $\delta_{b,c}$ is the identity matrix.
\end{Proposition}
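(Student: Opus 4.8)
The plan is to prove just one of the two identities, say $\sum_{b\in\fS_{n-2}}S(1a,1b)\,T(1b,1c)=\delta_{a,c}$; since $S$ and $T$ are square matrices of the same size over the field of rational functions in the Mandelstam variables, a one-sided inverse is automatically two-sided. I would first interchange the two summations,
\begin{gather*}
(ST)_{a,c}=\sum_{\substack{\text{trees}\\ \alpha}}\frac{(1c,\alpha)}{k_n^2\,s_\alpha}\Bigl(\,\sum_{b\in\fS_{n-2}}S(1a,1b)\,(1b,\alpha)\Bigr),
\end{gather*}
so that the matrix product becomes a single sum over binary trees, and then record three elementary facts: $(ST)_{a,c}$ is homogeneous of degree $0$ in the Mandelstams, since $S(1a,1b)$ is a product of $n-2$ linear factors while each summand of $T$ carries the $n-2$ propagator denominators $k_n^2 s_\alpha$; its only possible singularities are the factorization hyperplanes $s_I=0$, $s_I:=\bigl(\sum_{i\in I}k_i\bigr)^2$, that occur in $T$; and $T(1b,1c)$ is, up to sign, the doubly-ordered Berends--Giele current of the bi-adjoint scalar, so that its rows and columns satisfy the BCJ/monodromy relations among orderings.

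I would then induct on $n$. The base cases $n=3,4$ are checked by hand; for $n=4$ one has $S=\left(\begin{smallmatrix}-s_{12}s_{34}&s_{12}s_{13}\\ s_{12}s_{13}&-s_{13}s_{24}\end{smallmatrix}\right)$ and $T$ the matching matrix of three-term sums of inverse propagators, and $ST=\mathrm{Id}$ collapses to $s_{12}+s_{13}+s_{23}=k_4^2$. For the inductive step the key point is that $(ST)_{a,c}$ has no pole along any $s_I=0$: being a rational function of degree $0$ with no poles it is then a constant matrix $C$, and a finite check (the base case propagated through the induction, or direct evaluation at one convenient kinematic point, which is legitimate once $C$ is known to be Mandelstam-independent) identifies $C=\mathrm{Id}$. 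Vanishing of $\Res_{s_I=0}(ST)_{a,c}$ is where the work lies: only trees $\alpha$ whose internal-edge set contains the cut separating $I$ contribute to the pole, and for those both $s_\alpha$ and the coefficients $(1b,\alpha),(1c,\alpha)$ factorize across the cut, so that $\Res_{s_I=0}T$ is a product of two lower-point $T$-matrices; contracting with $S(1a,1b)$ and combining the degeneration $s_I\to0$ of the momentum kernel (into a product of lower-point kernels times an $I$-dependent prefactor), the inductive hypothesis, and the fact that the factorized currents obey the BCJ relations, one finds the contraction collapses to zero.

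The main obstacle is exactly this residue step: one must make the factorization of the coefficients $(1b,\alpha)$ across a cut explicit, track the signs produced when sub-orderings are reversed (cf.\ the $\overline b$ in \eqref{eq:kltrelation}), and match these against both the degeneration of $S$ and the orthogonality of the momentum kernel to monodromy relations on each factor. A more structural alternative, closer in spirit to the rest of the paper, is to recognise $T$ as the Gram matrix --- in the basis of half-ladder (left-comb) brackets $[\cdots[\lambda_1,\lambda_{\rho(2)}],\dots,\lambda_{\rho(n-1)}]$, for which $(1b,\alpha_\rho)=\delta_{b,\rho}$ --- of a natural pairing between Lie polynomials and their graded duals, and $S$ as the Gram matrix of the dual bases; then $ST=\mathrm{Id}$ is precisely the statement that these Gram matrices are mutually inverse, and the whole problem reduces to non-degeneracy of that pairing. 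The obstacle then migrates to establishing that non-degeneracy, which one would in turn deduce from the identification with the intersection form on $\cM_{0,n}$, non-degenerate exactly when $k_n^2\neq0$.
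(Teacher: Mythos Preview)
Your residue/factorization approach is a genuinely different route from the paper's, and the obstacle you flag is real and not resolved in your sketch. In particular, ``by the inductive hypothesis \dots\ the contraction collapses to zero'' is an assertion rather than an argument: the inductive hypothesis gives $S_LT_L=\mathrm{Id}$ on each factor, which is not zero, and the basis $\{1b: b\in\fS_{n-2}\}$ does not cleanly split along an arbitrary channel $I$, so neither $\Res_{s_I=0}T$ nor $S|_{s_I=0}$ is block-diagonal in it. Making this work requires a careful change of basis adapted to $I$, an explicit degeneration formula for $S$ in that basis, and a separate argument that the off-block pieces are annihilated by BCJ relations --- none of which you have supplied. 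The paper bypasses all of this with a short structural argument: it observes that $T\colon\cL_A^\vee\otimes M_A\to\cL_A\otimes M_A$ is a self-adjoint isomorphism and \emph{defines} a Lie bracket $\{\,,\,\}$ on $\cL_A^\vee\otimes M_A$ by the intertwining condition $T(\{a,b\})=[T(a),T(b)]$. One then defines $S$ on Lie monomials by replacing every $[\,,\,]$ with $\{\,,\,\}$; iterated intertwining gives $T(S(\alpha))=\alpha$ immediately, self-adjointness of $S$ follows from that of $T$, and $S(T(a))=a$ drops out of a three-line dual-basis computation. The explicit product formula~\eqref{eq:kltkernel} is then recovered \emph{a posteriori} by evaluating this abstract $S$ in the left-comb basis.

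Your second, ``more structural'' alternative points in the right direction but is circular as written: declaring $S$ to be ``the Gram matrix of the dual bases'' already presupposes $S=T^{-1}$, whereas the content of the proposition is that the explicit kernel~\eqref{eq:kltkernel} coincides with that abstract inverse. The missing ingredient is exactly the induced bracket $\{\,,\,\}$, which furnishes a constructive recursive description of $T^{-1}$ that can then be matched against~\eqref{eq:kltkernel}. Appealing to non-degeneracy of an intersection pairing on $\cM_{0,n}$ would establish that $T$ is invertible when $k_n^2\neq 0$, but would not identify its inverse with the given formula.
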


Section~\ref{sec:2} gives a basis-independent statement of this result, and gives a streamlined version of the proof in~\cite{2020FrostMM}.

\subsection{Summary}
This section summarizes the new results in Sections~\ref{sec:3} and~\ref{sec:4}, which build on the approach taken in Section~\ref{sec:2}, and culminate in formulas for some gauge and gravity tree amplitudes. The proof of~\eqref{eq:introST} involves a bracket operation (originally called the `S-map' in \cite{2014MS,2015MS}) defined on orderings. Some examples are
\begin{gather}\label{eq:introex}
\{1,23\} = s_{12} 123 - s_{13} 132, \qquad \{12,34\} = s_{23} 1234 - s_{13} 2134 + s_{14} 2143 - s_{24} 1243.
\end{gather}
The definition of $\{a,b\}$ is given in Section \ref{sec:2}, where it is also shown that $\{~,~\}$ is a Lie bracket. As explained in Section~\ref{sec:3}, and exploited in Section~\ref{sec:4}, the bracket operation $\{a,b\}$ naturally arises in the context of rational functions on the configuration space, $\Conf_{n-1}(\bC)$, of points in the complex plane. Write $\pt(12\dots n)$ for the function
\begin{gather*}
\pt(12\dots n) = \prod_{i=1}^{n-1} \frac{1}{z_i - z_{i+1}}.
\end{gather*}
Then (proved in Section~\ref{sec:3})
\begin{Proposition}
For an ordering $a$ of $1,\dots,k$, and an ordering $b$ of $k+1,\dots,n$:
\begin{gather}\label{eq:introSPT}
\left( \sum_{i=1}^k \sum_{j=k+1}^n \frac{s_{ij}}{z_i - z_j} \right) \pt(a) \pt(b) = \pt(\{a,b\}).
\end{gather}
\end{Proposition}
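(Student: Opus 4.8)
The plan is to prove~\eqref{eq:introSPT} by induction on $n$, or more precisely by induction on $k$, the number of points in the first ordering. The key observation is that the ``S-map'' bracket $\{a,b\}$ has a recursive definition (given in Section~\ref{sec:2}) that builds $\{a,b\}$ out of smaller brackets by peeling off points from $a$; I would mirror this recursion on the left-hand side of~\eqref{eq:introSPT} using partial-fraction identities for the Parke--Taylor-like functions $\pt(a)$. The base case, $k=1$, should be essentially the first example in~\eqref{eq:introex}: with $a = 1$ one has $\pt(a) = 1$, and the identity reduces to showing that $\big(\sum_{j=2}^n \tfrac{s_{1j}}{z_1 - z_j}\big)\pt(b) = \pt(\{1,b\})$, which can be checked directly by expanding $\{1,b\}$ and using the partial-fraction structure of $\pt(b)$.

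\smallskip

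For the inductive step I would write $a = a' i$ (or $i a'$), where $i$ is the last (or first) letter, and split the double sum $\sum_{i'\in a}\sum_{j\in b} \tfrac{s_{i'j}}{z_{i'}-z_j}$ into the terms with $i' = i$ and the terms with $i' \in a'$. The terms with $i' \in a'$, applied to $\pt(a')\pt(b)$, give $\pt(\{a',b\})$ by the inductive hypothesis, after accounting for the factor $\tfrac{1}{z_{\ell}-z_i}$ (where $\ell$ is the neighbour of $i$ in $a$) that relates $\pt(a)$ to $\pt(a')$. The terms with $i' = i$ must then be massaged, using partial fractions in $z_i$, so that $\sum_{j\in b}\tfrac{s_{ij}}{z_i - z_j}$ acting on $\pt(a')\pt(b)$ reproduces the remaining pieces of $\pt(\{a,b\})$ coming from inserting $i$ into each gap of the orderings appearing in $\{a',b\}$. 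The combinatorial bookkeeping here should be organized so that each ``insert $i$ between consecutive letters'' operation on orderings corresponds exactly to a partial-fraction split $\tfrac{1}{(z_p - z_q)} = \tfrac{1}{z_p - z_i}\cdot\tfrac{z_p - z_i}{z_p - z_q} + \cdots$, i.e.\ to the elementary identity $\tfrac{1}{z_p - z_i} - \tfrac{1}{z_q - z_i} = \tfrac{z_p - z_q}{(z_p - z_i)(z_q - z_i)}$.

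\smallskip

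The main obstacle will be matching the signs and the Mandelstam coefficients on both sides: the recursive definition of $\{a,b\}$ produces signs $(-1)^{\text{something}}$ and coefficients $s_{ij}$ (or sums $\sum s_{ij}$) attached to each term, and one must verify that the partial-fraction manipulations generate exactly these, with no leftover terms. A clean way to control this is to use momentum conservation, $\sum_j k_j = 0$, to rewrite $s_{ij}$-sums, together with the identity $\sum_{i\in a}\sum_{j\in b} s_{ij} = \big(\sum_{i\in a}k_i\big)^2 - \sum_{i<i', i,i'\in a} s_{ii'} - \cdots$ only where needed; but I expect the cleanest route is to keep all $s_{ij}$ independent and never use momentum conservation, so that~\eqref{eq:introSPT} becomes an identity of rational functions in the $z$'s with polynomial coefficients in the $s_{ij}$, provable purely combinatorially. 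An alternative, possibly shorter, approach avoiding induction would be to expand both sides fully: write $\pt(a)\pt(b) = \sum_{c} (\text{sign})\,\pt(c)$-type resolutions is not available, but one can instead expand $\pt(a)$ and $\pt(b)$ over a common partial-fraction basis and compare residues at $z_i = z_j$; the residue of the left-hand side at $z_i = z_j$ is manifestly $s_{ij}\,\pt(a)\pt(b)|_{z_i = z_j}$, and one would check the right-hand side $\pt(\{a,b\})$ has the same residues, using that $\{a,b\}$ is built from shuffle-like insertions. Either way, the crux is the exact correspondence between partial-fraction poles and the combinatorial insertion rule defining the S-map.
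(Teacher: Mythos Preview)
Your inductive strategy might be made to work, but it is considerably more involved than the paper's argument, and one premise is wrong: Section~\ref{sec:2} does \emph{not} give a recursive definition of $\{a,b\}$ obtained by peeling off letters of $a$; it gives only the closed-form expansion~\eqref{eq:Sexplicit}. You would first have to establish such a recursion yourself before your induction could begin, and the sign/coefficient bookkeeping you flag as ``the main obstacle'' is genuinely the whole content of the proof in your approach, not a detail to be filled in later.

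The paper bypasses all of this with a one-line, term-by-term argument. The point is that the explicit formula~\eqref{eq:Sexplicit} is \emph{already} a sum over pairs $(i,j)$ with $i\in a$, $j\in b$, each weighted by $s_{ij}$ --- exactly like the double sum on the left of~\eqref{eq:introSPT}. So the identity reduces to a statement about a \emph{single} pair: writing $a=a_1ia_2$ and $b=b_1jb_2$, one only needs
\[
\frac{1}{z_{ij}}\,\pt(a)\,\pt(b)\;=\;(-1)^{|a_2|+|b_1|}\,\pt\bigl((a_1\shuffle\overline{a_2})\,i\,j\,(\overline{b_1}\shuffle b_2)\bigr).
\]
This is immediate from the Kleiss--Kuijf-type identity of Lemma~\ref{lem:ptkk} (itself a consequence of the shuffle vanishing~\eqref{eq:ptshuffle}): apply it once to pivot $a$ so that $i$ sits at the right end, once to pivot $b$ so that $j$ sits at the left end, and then the factor $1/z_{ij}$ is exactly the link joining the two $\pt$-chains into one. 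Multiplying by $s_{ij}$ and summing over $(i,j)$ gives~\eqref{eq:introSPT}. No induction, no partial fractions, no momentum conservation. Your residue-comparison alternative is closer in spirit to this, but still less direct than simply recognising that both sides are indexed by the same set of pairs.
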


When combined with the definition of $S(1a,1b)$, \eqref{eq:introSPT} can be used to derive a formula for the partial tree amplitudes of the non-linear sigma model (the gauge theory associated to maps from the Riemann sphere to the Lie group ${\rm SU}(N)$). The result is
\begin{Proposition}The NLSM tree partial amplitudes are given by
\begin{gather}\label{intro:ANLSM}
A_{{\rm NLSM}} (a,n) = \sum_{b\in\fS_{n-2}} S(1b,1b) m(1b,n|a,n),
\end{gather}
where the sum is over all permutations of $2,3,\dots,n-1$, and $m(a,n|b,n) = k_n^2 T(1a,1b)$ are the partial biadjoint scalar amplitudes.
\end{Proposition}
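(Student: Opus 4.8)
The plan is to derive \eqref{intro:ANLSM} from the Cachazo--He--Yuan (CHY) representation of the NLSM partial amplitude, using the Parke--Taylor identity \eqref{eq:introSPT}. Recall that in the CHY formalism $A_{{\rm NLSM}}(a,n)$ is the integral over the gauge-fixed solutions of the scattering equations of $\pt(a,n)$ times the squared reduced Pfaffian $\bigl(\Pf'\cA\bigr)^2$ of the antisymmetric matrix with entries $\cA_{ij}=s_{ij}/(z_i-z_j)$, while the biadjoint amplitude $m(\alpha|\beta)$ has the CHY representation given by the same integral of $\pt(\alpha)\,\pt(\beta)$. Since the CHY integral is linear in the second half-integrand, the proposition is equivalent to the single identity of CHY half-integrands
\begin{gather}\label{eq:planPf}
\bigl(\Pf'\cA\bigr)^2=\sum_{b\in\fS_{n-2}}S(1b,1b)\,\pt(1,b,n),
\end{gather}
understood modulo the scattering-equation ideal --- equivalently, after pairing with an arbitrary $\pt(\alpha)$. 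Granting \eqref{eq:planPf}, multiplying by $\pt(a,n)$, integrating, and using the symmetry $m(\alpha|\beta)=m(\beta|\alpha)$ yields \eqref{intro:ANLSM} at once.

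To establish \eqref{eq:planPf} I would proceed in two steps. First, a direct rational-function computation --- expanding $\Pf'\cA$ over the perfect matchings of $\{2,\dots,n-1\}$ and squaring --- should rewrite $\bigl(\Pf'\cA\bigr)^2$ as a sum of functions of the form $\pt(\beta)$ for suitable nested $\{\,,\,\}$-brackets $\beta$; this is possible precisely because the factors $s_{ij}/(z_i-z_j)$ produced by \eqref{eq:introSPT} are the entries of $\cA$. (For instance, iterating \eqref{eq:introSPT} along a left-nested bracket gives $\pt\bigl(\{\cdots\{\{1,2\},3\},\dots,n-1\}\bigr)=\prod_{i=2}^{n-1}\bigl(\sum_{j<i}\frac{s_{ij}}{z_i-z_j}\bigr)$.) Second, re-expand each $\pt(\beta)$ in the Parke--Taylor basis $\{\pt(1,c,n):c\in\fS_{n-2}\}$: since $\{\,,\,\}$ expands explicitly into orderings, as in \eqref{eq:introex}, iterating that expansion turns $\beta$ into a Mandelstam-polynomial combination of Parke--Taylor factors, and the Section~\ref{sec:3} results on expanding rational functions on $\cM_{0,n}$ in a Parke--Taylor basis deliver the genuine basis coefficients. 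Comparing the resulting coefficient of $\pt(1,c,n)$ with the product of partial Mandelstam sums in the definition \eqref{eq:kltkernel} of the KLT matrix should identify it as $S(1c,1c)$, giving \eqref{eq:planPf}.

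The main obstacle is \eqref{eq:planPf} itself, and in particular the passage from the Pfaffian --- a signed sum over perfect matchings --- to the nested-bracket form: this requires an explicit dictionary between matchings of $\{2,\dots,n-1\}$ and the binary-tree data controlling \eqref{eq:kltkernel}, and the several independent sources of signs (the Pfaffian signs, the orientation conventions built into $\{\,,\,\}$, and the reversals $\overline b$ appearing in \eqref{eq:kltrelation}) must all be reconciled; the antisymmetry and Jacobi relations for $\{\,,\,\}$ established in Section~\ref{sec:2} are exactly what organise this bookkeeping. One must also take care that every manipulation stays within the equivalence modulo the scattering-equation ideal, since the two sides of \eqref{eq:planPf} need only agree as CHY half-integrands and not as rational functions on $\Conf_{n-1}(\bC)$. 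The remaining ingredients --- linearity of the CHY integral and the symmetry of $m$ --- are routine, so the proposition follows once \eqref{eq:planPf} is in hand.
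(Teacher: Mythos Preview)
Your high-level strategy is right: reduce the CHY half-integrand to a Parke--Taylor expansion with coefficients $S(1b,1b)$, then integrate against $\pt(a,n)$ to get biadjoint amplitudes. But the concrete route you propose---expanding $\Pf'\cA$ over perfect matchings, squaring, and then invoking the bracket identity \eqref{eq:introSPT}---is not the one the paper takes, and the obstacles you flag (the matching/tree dictionary, the multiple sign sources, working only modulo the scattering-equation ideal) are all artefacts of that choice.

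The paper instead starts from the \emph{determinant} form of the NLSM half-integrand, $\det A[i,j]/z_{ij}^2$, where $A$ is the Laplacian-type matrix with off-diagonal entries $s_{ij}/z_{ij}$ and row sums zero. After gauge-fixing $z_n=\infty$ and choosing $i=1$, $j=n$, the relevant object is $\det A[1,n]$, and the key tool is Kirchhoff's \emph{matrix tree theorem}: it expands $\det A[1,n]$ as a sum over spanning trees $G$ on $\{1,\dots,n-1\}$ of $\prod_{i\to j}s_{ij}/z_{ij}$. Each such product is exactly the rational function appearing in Proposition~\ref{prop:tree} (equation~\eqref{intro:Trat}), which rewrites it as $\sum_{a:\,x(i)<_{1a}i}\pt(1a)$. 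Swapping the sums over trees and over orderings then gives, as an \emph{exact rational-function identity} (no scattering-equation support needed),
\[
\det A[1,n]=\sum_{a\in\fS_{n-2}}\pt(1a)\prod_{i=2}^{n-1}\sum_{j<_{1a}i}s_{ij}=\sum_{a\in\fS_{n-2}}S(1a,1a)\,\pt(1a),
\]
by \eqref{eq:Scomponent}. Integrating against $\pt(a)$ yields \eqref{intro:ANLSM} immediately.

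So the two ingredients you are missing are (i) the passage to the Laplacian determinant, which makes the matrix tree theorem available, and (ii) Proposition~\ref{prop:tree}, which converts a spanning-tree product directly into a Parke--Taylor sum. With these in hand there is no Pfaffian combinatorics, no matching/tree dictionary, and no need to work modulo the ideal. Your iterated-bracket example $\pt(\{\cdots\{1,2\},\dots,n-1\})$ is related but produces a single \emph{row} $S(12\cdots(n{-}1),1b)$ of the KLT matrix (see the display after Lemma~\ref{lem:ptS}), not the diagonal entries $S(1b,1b)$; it is the tree sum, not a single nested bracket, that lands on the diagonal.
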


This agrees with earlier results reported in \cite{2017CMS,2020Mafra}; but the methods used to derive it are new. The idea that leads to~\eqref{intro:ANLSM} is to use the matrix tree theorem to expand the matrix determinant that appears in the integrand of the so-called CHY formula for $A_{\text{NLSM}}$. Identities proved in Section~\ref{sec:3} are then used to re-arrange the integrand into a suitable form. The most useful identity is
\begin{Proposition}
Let $G$ be a tree with vertex set $1,\dots,n$. Orient the edges of $G$ by fixing~$1$ to be a sink, and demanding that all vertices $($except for~$1)$ have only one outgoing edge. For $i\neq 1$, write $x(i)$ for the endpoint of the edge outgoing from $i$. Then
\begin{gather}\label{intro:Trat}
\prod_{\substack{{\rm edges} \\ i\rightarrow j}} \frac{1}{z_{ij}} = \sum_{\substack{a\\ x(i)<_{1a}i}} \pt(1a).
\end{gather}
The sum is over all orderings, $a$, such that $x(i)$ precedes $i$ in $1a$, for all $i$.
\end{Proposition}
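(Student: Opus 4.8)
The plan is to induct on the number of vertices $n$. When $n=1$, both sides are empty products equal to $1$; so suppose $n\ge 2$ and that \eqref{intro:Trat} holds for every rooted tree on fewer vertices. Because $G$ is a tree on $n\ge 2$ vertices it has a leaf $\ell\neq 1$. Its single incident edge must be the outgoing one, $\ell\to v$ with $v:=x(\ell)$, since $\ell\neq 1$ forces $\ell$ to have an outgoing edge; in particular $\ell\neq x(i)$ for all $i\neq \ell$. Hence $G':=G\setminus\ell$ is a tree on $\{1,\dots,n\}\setminus\{\ell\}$ whose root $1$, orientation, and map $x'$ are the restrictions of those of $G$. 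On the left of \eqref{intro:Trat} we peel off the edge $\ell\to v$:
\begin{gather*}
\prod_{\text{edges }i\to j\text{ of }G}\frac{1}{z_{ij}}=\frac{1}{z_{\ell v}}\prod_{\text{edges }i\to j\text{ of }G'}\frac{1}{z_{ij}},
\end{gather*}
and by the inductive hypothesis the second factor equals $\sum_{a'}\pt(1a')$, the sum running over orderings $a'$ of $\{1,\dots,n\}\setminus\{\ell\}$ with $x'(i)<_{1a'}i$ for all $i$.

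It remains to see that multiplying each such $\pt(1a')$ by $1/z_{\ell v}$ reproduces exactly the orderings counted on the right of \eqref{intro:Trat}. This rests on a bijection: deleting $\ell$ from an admissible ordering $1a$ of $\{1,\dots,n\}$ leaves an admissible ordering $1a'$ of $G'$, since the relative order of the remaining labels --- hence every inequality $x(i)<_{1a}i$ with $i\neq\ell$ --- is unchanged, while the inequality for $i=\ell$, namely $v<_{1a}\ell$, records that $\ell$ sits in one of the gaps after $v$; conversely, inserting $\ell$ into any such gap of an admissible $1a'$ gives back an admissible $1a$, and these operations invert one another. So, writing a fixed $1a'$ as $(c_1,\dots,c_{n-1})$ with $c_1=1$ and $v=c_m$, it is enough to prove the rational-function identity
\begin{gather*}
\sum_{p=m}^{n-1}\pt\big(c_1,\dots,c_p,\ell,c_{p+1},\dots,c_{n-1}\big)=\frac{1}{z_{\ell v}}\,\pt(1a'),
\end{gather*}
the $p=n-1$ term meaning $\pt$ of $(c_1,\dots,c_{n-1},\ell)$. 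Each summand agrees with $\pt(1a')$ except around the inserted $\ell$: for $p<n-1$ the single factor $1/(z_{c_p}-z_{c_{p+1}})$ becomes $1/\big((z_{c_p}-z_\ell)(z_\ell-z_{c_{p+1}})\big)$, and for $p=n-1$ an extra factor $1/(z_{c_{n-1}}-z_\ell)$ is appended. After dividing out $\pt(1a')$ and applying $\tfrac{1}{(x-z)(z-y)}=\tfrac{1}{x-y}\big(\tfrac{1}{x-z}+\tfrac{1}{z-y}\big)$ to each term, the remaining sum equals $\sum_{p=m}^{n-1}\tfrac{1}{z_{c_p}-z_\ell}+\sum_{p=m+1}^{n-1}\tfrac{1}{z_\ell-z_{c_p}}$, all of whose terms with index $>m$ cancel pairwise, leaving $\tfrac{1}{z_{c_m}-z_\ell}=\tfrac{1}{z_{\ell v}}$. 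Combining the three displays finishes the induction.

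The only genuinely nontrivial point is this last computation: summing $\pt$ over \emph{all} admissible insertions of $\ell$ (and no others) collapses to the one factor $1/z_{\ell v}$, which amounts to the cancellation of the spurious poles $z_\ell=z_{c_p}$ with $p>m$ that the individual terms $\pt(1a)$ carry along non-edges of $G$ --- this is also why \eqref{intro:Trat} fails term by term. The rest --- pruning a leaf on the left, and matching orderings with insertion points on the right --- is routine bookkeeping. An alternative would be a pole-theoretic argument: check directly that the right-hand side of \eqref{intro:Trat} is a rational function with at worst simple poles, supported only on the hyperplanes $z_i=z_j$ with $ij$ an edge of $G$, and then identify it with the left-hand side by its residues; but the leaf induction is the shorter route.
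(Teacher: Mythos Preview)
Your argument is correct and takes a genuinely different route from the paper's. The paper does not prune a leaf and telescope; instead it repeatedly invokes Lemma~\ref{lem:ptkk} (the shuffle identity for $\pt$) in the form
\[
\pt(ia_1)\,\pt(ia_2)\cdots\pt(ia_k)=\pt\big(i\,(a_1\shuffle a_2\shuffle\cdots\shuffle a_k)\big),
\]
applied at a vertex $i$ where all incoming branches are already linear chains, and then works ``down the tree'' toward the sink. The shuffle on the right is what generates the linear extensions of the tree partial order, so the sum over admissible $1a$ appears automatically. Your approach, by contrast, is a bare partial-fraction telescope that never mentions shuffles: it is more elementary and entirely self-contained, while the paper's proof ties the identity back to the algebraic structure (Ree's theorem, the quotient $\cL_A^\vee=W_A/\Sh_A$) developed earlier in Sections~\ref{sec:2}--\ref{sec:3}. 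In effect, the paper merges whole branches at once using the shuffle lemma, whereas you attach one leaf at a time using $\tfrac{1}{(x-z)(z-y)}=\tfrac{1}{x-y}\big(\tfrac{1}{x-z}+\tfrac{1}{z-y}\big)$; both collapse the same spurious poles, just in different groupings.

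One small slip to fix: your telescoping leaves $\tfrac{1}{z_{c_m}-z_\ell}=\tfrac{1}{z_v-z_\ell}=\tfrac{1}{z_{v\ell}}$, not $\tfrac{1}{z_{\ell v}}$. This is only a sign, and it is entangled with the paper's own orientation convention for the left-hand side (which is not entirely consistent between \eqref{intro:Trat} and \eqref{eq:TratfunG}); just make sure the edge factor you peel off and the telescoped factor carry the same sign.
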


The identity, \eqref{intro:Trat}, can also be used to obtain formulas for other tree amplitudes that have CHY formulas (or similar). In four dimensions, gravity tree amplitudes can be expressed in terms of determinants of matrices called Hodges' matrices. Applying the same idea that leads to~\eqref{intro:ANLSM} also leads to new formulas for 4D gravity (and then, by the KLT relations, for Yang--Mills) amplitudes. These formulas are not always easy to evaluate. However, in the MHV case (when only 2 gluons are $+$ helicity, and the rest are $-$ helicity), contact can be made with known results. The MHV result, expressed in the standard spinorial notation used, is
\begin{Proposition}The tree level gravity amplitude can be expanded as
\begin{gather}\label{intro:newG3}
M_{\rm GR} = \sum_{\sigma \in \fS_{n-3}} A_{\rm YM}(12b\sigma) N(12b\sigma),
\end{gather}
where
\begin{gather*}
N(12b\sigma) = \frac{\avg{12}[12]^2}{[b1][b2]} \prod_{\substack{j=3 \\ i\neq b}}^n \sum_{i<_\sigma j} \frac{[i2][ij]}{[j2]}.
\end{gather*}
\end{Proposition}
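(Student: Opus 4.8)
The plan is to run the argument behind \eqref{intro:ANLSM} again, now fed by the four-dimensional (Hodges/Cachazo--Skinner) representation of the MHV gravity amplitude instead of the CHY integrand for the NLSM. Recall that in the MHV sector $M_{\rm GR}$ equals, up to a universal prefactor built from the two positive-helicity labels $1$ and $2$, a codimension-three minor of the Hodges matrix $\Phi$ (off-diagonal entries $\Phi_{ij}=[ij]/\avg{ij}$, the diagonal fixed by a choice of reference spinors); equivalently it is a CHY-type integral over $\cM_{0,n}$ whose integrand is that determinant times a Parke--Taylor factor $\pt(\cdot)$. First I would fix the deleted rows and columns of $\Phi$ and the reference data so that the scalar prefactor produced is exactly $\avg{12}[12]^{2}/([b1][b2])$, leaving a determinant of a weighted graph Laplacian on the remaining vertices.

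Next I would expand that determinant by the matrix-tree theorem, turning it into a signed sum over spanning trees $G$ of the vertex set, each tree contributing $\prod_{\text{edges }(ij)}[ij]/\avg{ij}$ together with the reference-spinor weights attached to the deleted vertices. The factors $1/\avg{ij}$ are then rationalized: on the support of the MHV scattering equations $\avg{ij}$ is proportional to $z_i-z_j$ up to per-vertex scalings, and a convenient choice of the reference data makes these per-vertex Jacobians reassemble into the ratios $[i2]/[j2]$ appearing in $N$. After this substitution, orienting each $G$ toward the sink $1$, the identity \eqref{intro:Trat} replaces $\prod_{\text{edges }i\to j}1/z_{ij}$ by a sum of $\pt(1a)$ over the orderings $a$ compatible with $G$, so the integrand becomes a sum of terms $\pt(12b\sigma)\times(\text{holomorphic coefficient})$, with $1,2,b$ forced to the front once one passes to a Kleiss--Kuijf/BCJ basis of orderings.

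Finally, performing the $\cM_{0,n}$ integral (equivalently, evaluating on the MHV solution), each $\pt(12b\sigma)$ dressed with the helicity prefactor is the Parke--Taylor amplitude $A_{\rm YM}(12b\sigma)$; it then remains to show that the total coefficient of $A_{\rm YM}(12b\sigma)$ --- the signed sum, over all spanning trees compatible with the ordering $12b\sigma$, of $\prod_{\text{edges}}[ij]$ times the reference weights and the signs coming from the matrix-tree theorem and from \eqref{intro:Trat} --- collapses into the single nested product $\frac{\avg{12}[12]^{2}}{[b1][b2]}\prod_{\substack{j=3\\ i\neq b}}^{n}\sum_{i<_{\sigma}j}\frac{[i2][ij]}{[j2]}$. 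This resummation is where the real work lies, and I expect it to be the \emph{main obstacle}: the product form is exactly of the type of the KLT kernel \eqref{eq:kltkernel} and of the NLSM numerator $S(1b,1b)$ of \eqref{intro:ANLSM}, so rather than an ad hoc induction I would derive it from the basis-expansion identities of Section~\ref{sec:3} and the associated S-map/Lie-bracket structure (cf.\ \eqref{eq:introex}), which exist precisely to convert tree sums into such nested products. Two lesser points need attention along the way: checking that the reference-spinor and deleted-index choices really produce the prefactor $\avg{12}[12]^{2}/([b1][b2])$ and are insensitive to the residual gauge freedom, and justifying the restriction to the MHV scattering-equation locus --- the latter being standard input from Hodges' formula and the Cachazo--Skinner construction, which can simply be cited.
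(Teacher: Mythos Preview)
Your outline is the paper's argument: Hodges determinant $\to$ matrix-tree expansion $\to$ scattering-equation substitution $\to$ Proposition~\ref{prop:tree} $\to$ reorder sums. But your calibration of where the difficulty lies is inverted.

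The ``resummation'' you flag as the main obstacle is immediate: a spanning tree with a designated root is exactly the data of a predecessor $x(j)$ for each non-root $j$, so $\sum_{G}\prod_{\text{edges}}(\cdots)$ over trees compatible with an ordering $\sigma$ factors as $\prod_{j}\sum_{i<_{\sigma}j}(\cdots)$ by a plain exchange of sum and product --- no S-map machinery or induction is needed, exactly as in the passage from \eqref{eq:Adettwo} to \eqref{eq:ANLSM}. Conversely, the step you gloss over is where the paper actually does the work. The spanning trees live on the negative-helicity set $\{3,\dots,n\}$ only (for $k=2$ the factor $\det H[a]$ is just $t_1t_2\avg{12}/z_{12}$), and the polarized scattering equations are solved \emph{explicitly} for the auxiliary weights, giving $t_1t_j=[j2]\,z_{j1}/[12]$ and $t_1/t_2=-[b2]z_{b1}/([b1]z_{b2})$. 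It is this substitution, not a choice of reference spinors, that produces both the prefactor $\avg{12}[12]^{2}/([b1][b2])$ and the ratios $[i2]/[j2]$. After substituting, one needs a mild variant of Proposition~\ref{prop:tree} with $b$ as source and a telescoping of the extra $z_{i1}/z_{j1}$ factors; this yields $\PT(12b\sigma)$ directly, so no Kleiss--Kuijf move is required to bring $1,2,b$ to the front.
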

This is closely related to the early result due to Berends--Giele--Kuijf \cite{1988BGK}, which gives a formula for $M_{\text{GR}}$, but in a different form. The methods used in the derivation lead directly to expressions of the form of~\eqref{intro:newG3}, which is well suited to the KLT relations. Indeed, the field theory KLT relations imply that, given~\eqref{intro:newG3}, the YM partial amplitudes can be written as
\begin{gather*}
A_{\rm YM}(\rho) = \sum_{\sigma \in \fS_{n-3}} m(12b\sigma|\rho) N(12b\sigma).
\end{gather*}
Finally, the discussion in Section~\ref{sec:5} concludes by relating the approach taken in this paper to three important outstanding problems for our understanding of perturbative gauge theory amplitudes and the KLT relations.

\section{The KLT kernel}\label{sec:2}
The KLT kernel, \eqref{eq:kltkernel}, is the inverse of a certain map, as shown in~\cite{2020FrostMM}. This section revisits this result, emphasizing those aspects which are relevant for Sections~\ref{sec:3} and~\ref{sec:4}.

To fix notation, write $A$ for the set $\{1,\dots,n\}$. An \emph{ordering} of $A$ is a word that uses each letter $i\in A$ exactly once. Write $\fS_A$ for the set of orderings of $A$, and $W_A$ for the $\bR$-vector space generated by $\fS_A$. There is a multilinear inner product, $(~,~)$, on $W_A$, such that, for two distinct orderings $a$ and $b$,
\begin{gather*}
(a,a) = 1,\qquad\text{and}\qquad (a,b) = 0.
\end{gather*}
Let $\cL_A \subset W_A$ be the subspace of \emph{multilinear Lie polynomials} in~$W_A$.\footnote{If $\Lie(A)$ is the free Lie algebra on $A$, then $\cL_A$ is the intersection $\Lie(A) \cap W_A$, using the inclusions of $W_A$ and $\Lie(A)$ into the free associative algebra on~$A$.} And let $\Sh_A \subset W_A$ be the subspace of nontrivial shuffle products: $\Sh_A$ is linearly spanned by expressions of the form
\begin{gather*}
a \shuffle b,
\end{gather*}
where $a$ and $b$ are two (non-empty) words whose concatenation, $ab$, is an ordering of~$A$.\footnote{The shuffle product, $a\shuffle b$, is defined inductively by $(ia)\shuffle (jb) = i(a\shuffle jb) + j (ia \shuffle b)$, and $a\shuffle e = e \shuffle a = a$, where $i$, $j$ are individual letters, $e$ is the empty word, and $a$, $b$ are nonempty words.} By Ree's theorem \cite{1993R}, $\cL_A$ is the orthogonal subspace
\begin{gather*}
\cL_A = \Sh_A^\perp,
\end{gather*}
with respect to the given inner product. The dual vector space $\cL_A^\vee$ is
\begin{gather*}
\cL_A^\vee = W_A / \Sh_A.
\end{gather*}

\begin{Remark}\label{rmk:shA} An alternative definition of $\Sh_A$ is as follows. It is the subspace in $W_A$ spanned by the expressions
\begin{gather}\label{eq:kksha}
aib - (-1)^{|a|} i (\overline{a}\shuffle b),
\end{gather}
where $aib$ is an ordering in $\fS_A$, and $i\in A$ is a single letter, and $|a|$ denotes the length of the word~$a$. See \cite[Corollary 2.4]{2002S} or \cite[Lemma~3.6]{Thesis}.
\end{Remark}

A Lie monomial $\alpha \in \cL_A$ defines a \emph{rooted binary tree} with leaves labelled by $1,\dots,n$. See Figure~\ref{fig:tree} for an example. If the Lie monomial $\alpha$ is written in its bracketed form, then the associated tree has one internal edge for each pair of brackets in $\alpha$, and this edge can be labelled by the subset $I\subset A$ of letters that appear inside that pair of brackets. Write $P(\alpha)$ for the set of edges of $\alpha$ (including the root edge). For example,
\begin{gather*}
P([[1,2],3]) = \{\{12\},\{123\}\}.
\end{gather*}
Given null momenta $k_i^\mu$ for each $i=1,\dots,n$ \big($k_i^2 = 0$\big), form the associated Mandelstam variables
\begin{gather*}
s_I = \left( \sum_{i\in I} k_i^\mu \right)^2,
\end{gather*}
or, equivalently,
\begin{gather}\label{eq:linear}
s_I = \sum_{\{ij\}\subset I} s_{ij},
\end{gather}
where $s_{ij} = 2 k_i \cdot k_j$. Given a tree $\alpha$, associate to each edge, $I \in P(\alpha)$, a massless scalar propagator: $1/s_I$. For each rooted binary tree, $\alpha$, introduce the following `product of propagators' monomial
\begin{gather}\label{eq:salpha}
\tilde s_\alpha = \prod_{I\in P(\alpha)} s_I.
\end{gather}
Finally, it is convenient to write $M_A$ for the Laurent ring with variables $s_I$, $I\subset A$, subject to the linear relations above, \eqref{eq:linear}.

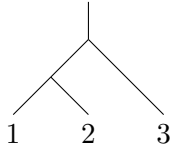
\begin{figure}[t]\centering
\begin{tikzpicture}[scale=0.5]
\draw (0,0) node[below]{1} -- (0.5,0.5);
\draw (0.5,0.5) -- (1,1);
\draw (1,1) -- (1.5,0.5);
\draw (1.5,0.5) -- (2,0) node[below]{2} ;
\draw (1,1) -- (1.5,1.5);
\draw (1.5,1.5) --(2,2);
\draw (2,2) -- (3,1);
\draw (3,1) --(4,0) node[below]{3};
\draw (2,2) -- (2,3);
\end{tikzpicture}
\caption{The labelled rooted binary tree corresponding to the Lie monomial $\pm [[1,2],3]$.}\label{fig:tree}
\end{figure}

For an ordering $a\in \fS(A)$, write
\begin{gather*}
T(a) = \sum_{\alpha} \frac{(a,\alpha)\alpha}{\tilde s_{\alpha}} \in \cL_A\otimes M_A,
\end{gather*}
where the sum is over all rooted binary trees with leaves labelled by $A$. $T(a)$ is a `prototype' of a gauge theory partial tree amplitude. Kapranov~\cite{2012Kap} proposed to study the KLT relation by regarding $T(a)$ as defining a linear map
\begin{gather*}
T \colon \ \cL_A^\vee\otimes M_A \rightarrow \cL_A\otimes M_A.
\end{gather*}
The map $T$ is self-adjoint with respect to the pairing between $\cL_A^\vee$ and $\cL_A$,
\begin{gather}\label{eq:Tself}
(a,T(b)) = (b,T(a)).
\end{gather}
Moreover, the functions $1/\tilde s_\alpha$ are linearly independent in $M_A$, and this implies that $\ker T$ is trivial. Indeed, for $P \in \cL_A^\vee$, if $T(P) = 0$, then $(P,\alpha) = 0$ for all Lie monomials $\alpha$. By dimension counting, $T$ is onto, so it follows that $T$ is an isomorphism.

Write $\tilde\cL_A = \oplus_{B\subset A} \cL_B$, and so on. $T$ extends to an isomorphism
\begin{gather*}
\tilde T \colon \ \tilde \cL_A^\vee\otimes M_A \rightarrow \tilde \cL_A \otimes M_A.
\end{gather*}
$\tilde\cL_A\otimes M_A$ is a Lie algebra with the usual bracket.\footnote{But note that the bracket of two Lie monomials must be zero in $\tilde \cL_A$ if they share any letter in common.} The isomorphism $\tilde T$ then induces a Lie bracket on the dual vector space! Indeed, define a bracket operation, $\{~,~\}$, on $\cL_A^\vee\otimes M_A$ by
\begin{gather}\label{eq:intertwine}
T(\{a,b\}) = [T(a),T(b)],
\end{gather}
for $a,b \in \cL_A^\vee$. Since $\tilde T$ is an isomorphism, $\{~,~\}$ is a Lie bracket. In fact, for disjoint orderings~$a$ and~$b$, it can be shown that, explicitly
\begin{gather}\label{eq:Sexplicit}
\{a,b\} = \sum_{\substack{a=a_1ia_2\\b=b_1jb_2}} (-1)^{|a_2|+|b_1|}s_{ij} (a_1\shuffle\overline{a_2})ij\big(\overline{b_1}\shuffle b_2\big).
\end{gather}
Examples of this were given above, in \eqref{eq:introex}. See also \cite[Section~4]{2020FrostMM} or \cite[Chapter~4]{Thesis}.

For a Lie monomial~$\alpha$, written in bracketed form, let $S(\alpha) \in \cL_A^\vee\otimes M_A$ be obtained from $\alpha$ by replacing every pair of brackets with a pair of braces. For example,
\begin{gather*}
S([[1,2],3]) = \{\{1,2\},3\}.
\end{gather*}
This extends to define a linear map
\begin{gather*}
S \colon \ \cL_A\otimes M_A \rightarrow \cL_A^\vee\otimes M_A.
\end{gather*}
Repeated applications of \eqref{eq:intertwine} gives that
\begin{gather*}
T(S(\alpha)) = \alpha.
\end{gather*}
This implies that $S$ is self-adjoint:
\begin{gather*}
(\beta,S(\alpha)) = (T(S(\beta)),S(\alpha)) =(S(\beta),T(S(\alpha))) = (S(\beta),\alpha),
\end{gather*}
using \eqref{eq:Tself}.

\begin{Proposition}\label{prop:TS}
$T$ and $S$ are inverses.
\end{Proposition}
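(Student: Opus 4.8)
The plan is to combine two facts already in hand from this section: $T$ is a linear isomorphism, and $S$ is built precisely so as to intertwine the brace bracket with the commutator. First I would record the base case $T(i) = i = S(i)$ for a single letter $i$ — the only labelled binary tree on one leaf is the leaf itself, and $\tilde s_i$ is the empty product. Then, by induction on the number of brackets, I would show $T(S(\alpha)) = \alpha$ for every bracketing $\alpha$ of a multilinear Lie monomial: writing $\alpha = [\beta,\gamma]$ with $\beta,\gamma$ Lie monomials on complementary leaf sets, one has $S(\alpha) = \{S(\beta),S(\gamma)\}$ by the definition of $S$, whence, applying the extension $\tilde T$ and the defining intertwining relation \eqref{eq:intertwine},
\begin{gather*}
\tilde T(S(\alpha)) = \big[\tilde T(S(\beta)),\tilde T(S(\gamma))\big] = [\beta,\gamma] = \alpha
\end{gather*}
by the inductive hypothesis. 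Every map here is $M_A$-linear, so the tensor factor $M_A$ plays no role.

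The point that needs care is that $S$ is genuinely a well-defined map on $\cL_A \otimes M_A$, not merely on formal bracketings: a Lie monomial has many presentations, related by antisymmetry and the Jacobi identity. This is exactly where the earlier observation that $\{~,~\}$ is a Lie bracket on $\tilde\cL_A^\vee \otimes M_A$ is needed — restricted to multilinear words, $S$ is nothing but the Lie-algebra homomorphism out of the free Lie algebra on $A$ determined by $i \mapsto i$, so it respects all relations among bracketings. Equivalently, one may bypass this: if $\alpha,\alpha'$ present the same element of $\cL_A$, then $T(S(\alpha)) = \alpha = \alpha' = T(S(\alpha'))$ by the previous paragraph, and injectivity of $T$ forces $S(\alpha) = S(\alpha')$.

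With $S$ well-defined, the multilinear Lie monomials span $\cL_A$, so the identity $T\circ S = \mathrm{Id}$, checked above on monomials, holds on all of $\cL_A \otimes M_A$. Since $T$ is invertible, composing with $T^{-1}$ on the left yields $S = T^{-1}$, and therefore $S\circ T = \mathrm{Id}$ as well, which is the assertion. I expect the only delicate point to be the well-definedness of $S$; the rest is bookkeeping, given \eqref{eq:intertwine} and the isomorphism property of $T$.
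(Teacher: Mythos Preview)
Your proof is correct, but it proceeds differently from the paper's. Both start from the same ingredients already established in the section: the identity $T(S(\alpha)) = \alpha$ on Lie monomials (which the paper states just before the proposition, and which you spell out as an induction), and the fact that $T$ is an isomorphism. From there you argue directly: $T\circ S = \mathrm{Id}$ on a spanning set, hence everywhere, and invertibility of $T$ then forces $S = T^{-1}$. The paper instead proves the \emph{other} composition $S\circ T = \mathrm{Id}$ by a dual-basis computation that invokes the self-adjointness of both $S$ and $T$. Your route is shorter and avoids choosing bases; the paper's route makes the symmetric matrix structure explicit, which feeds directly into the component formulas \eqref{eq:TTST} and \eqref{eq:Scomponent} used later. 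Your attention to the well-definedness of $S$ (via $\{~,~\}$ being a genuine Lie bracket, or equivalently via injectivity of $T$) is a point the paper leaves implicit.
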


\begin{proof}
Let $b_i$, $\beta_i$ be a pair of dual bases for $\cL_A$ and $\cL_A^\vee$ (with $i = 1,\dots, (n-1)!$). For an ordering $a\in W_A$,
\begin{gather*}
S(T(a)) = \sum_i S(\beta_i) (b_i,T(a)) = \sum_{i,j} b_j (S(\beta_i), \beta_j) (b_i,T(a)).
\end{gather*}
Using that $S$ is self-adjoint gives
\begin{gather*}
rhs = \sum_{i,j} b_j (\beta_i, S(\beta_j)) (b_i,T(a)) = \sum_j b_j (T(a), S(\beta_j)) = \sum_j b_j (a,\beta_j).
\end{gather*}
But $b_i$, $\beta_i$ is a pair of dual bases, so $S(T(a)) = a$.
\end{proof}

Fixing a pair of dual bases as above, the components of $T$ and $S$ are $(b_j,T(b_i)) = T_{ij}$ and $(\beta_j,S(\beta_i)) = S_{ij}$. These are $(|A|-1)!\times (|A|-1)!$ matrices, and the proposition says that~$S_{ij}$ is the matrix inverse of $T_{ij}$. This very simple definition of $S$ was missed in the literature, possibly because of the limit that appears in~\eqref{eq:kltrelation}. Moreover, notice that, trivially,
\begin{gather}\label{eq:TTST}
T_{il} = \sum_{j,k} T_{ij} S_{jk} T_{kl}.
\end{gather}
This will be seen to imply the field theory KLT relations for the gauge and gravity-like theories studied in Section~\ref{sec:4}.

A possible choice of dual bases for $\cL_A$ and $\cL_A^\vee$ is to take the $(n-1)!$ words $1b$, for each $b$ an ordering in $\fS(2,\dots,n)$, and dually the $(n-1)!$ Lie monomials
\begin{gather*}
\ell(1b) = [[\dots [[1,b(1)],b(2)],\dots ],b(n-1)] \in \cL_A,
\end{gather*}
for each $b \in \fS(2,\dots,n)$. These are dual bases because
\begin{gather*}
(1b,\ell(1b')) = \begin{cases} 1 & \text{if $b=b'$,} \\ 0 & \text{otherwise.} \end{cases}
\end{gather*}
In these bases, the components of $S$ are found to be \cite{2020FrostMM}
\begin{gather}\label{eq:Scomponent}
S(1a,1b) = (\ell(1b),S(\ell(1c))) = \prod_{i=2}^{n-1} \sum_{\substack{j<_{1b} i\\j<_{1c} i}} s_{ij}.
\end{gather}
This formula, discussed in the introduction, is the formula first found by~\cite{2010BBDSV} (albeit with the order of one of the words reversed). Many variations on this formula (including the original formula in~\cite{2010BBDSV}) can be obtained by choosing to compute the matrix elements of~$S$ using a different pair of bases.

\section{Scattering equations identities}\label{sec:3}
The `Cachazo--He--Yuan (CHY)' formulas express the partial tree amplitudes of several gauge theories as a sum of resides of logarithmic forms on $\cM_{0,n}$, as reviewed in Section~\ref{sec:43}. The logarithmic forms on $\cM_{0,n}$ satisfy algebraic identities that imply, via the CHY formulas, identities amoung the associated gauge theory tree amplitudes. This section concludes by proving one such identity, Proposition~\ref{prop:tree}, which is used in applications to gauge theory amplitudes in Section~\ref{sec:4}.

The open stratum of the moduli space $\cM_{0,n}$ is defined as
\begin{gather*}
\cM_{0,n}(\bC) = \big(\bC\bP^1\big)^{\oplus n}_* / \PSL_2\bC,
\end{gather*}
where $\big(\bC\bP^1\big)^{\oplus n}_*$ denotes $n$-tuples of pairwise distinct points in $\bC\bP^1$, and $\PSL_2\bC$ acts by M\"obius transformations. Write $\bC_{n-1}^*$ for the braid hyperplane arrangement, $\bC^{n-1}_*:=\bC^{n-1} - \Delta$, where~$\Delta$ is the big diagonal (i.e., the union of the hyperplanes $z_i-z_j = 0$). The open stratum of $\cM_{0,n}$ is the quotient of this by the free action of $\bC^*\ltimes \bC$, that acts as $(a,b)\colon z \mapsto az+b$,
\begin{gather*}
\cM_{0,n}(\bC) \simeq \bC^{n-1}_* / \bC^*\ltimes \bC.
\end{gather*}
This follows by setting $z_n = \infty$, and noticing that the stabalizer in $\PSL_2\bC$ of a point in $\bP^1$ is $\bC^*\ltimes \bC$.

Write $z_{ij} := z_i - z_j$. In the ring of rational functions on $\bC^{n-1}_*$, there is a natural submodule spanned by the \emph{broken Parke--Taylor} functions, $\pt(a)$, defined for a given word $a = 123\dots n-1$, as the product
\begin{gather*}
\pt(123\dots n-1) = \prod_{i=1}^{n-2} \frac{1}{z_{ii+1}}.
\end{gather*}
It can be shown (by an explicit induction, or by using general results from \cite{1999BV}) that these functions satisfy
\begin{gather}\label{eq:ptshuffle}
\pt(a\shuffle b) = 0,
\end{gather}
for any two disjoint (non-empty) words $a$ and $b$ (i.e., two words with no letters in common). In view of~\eqref{eq:kksha},~\eqref{eq:ptshuffle} further implies that
\begin{Lemma}\label{lem:ptkk}
For an ordering $a1b \in \fS(1,\dots,n-1)$,
\begin{gather*}
\pt(a1b) = (-1)^{|a|} \pt(1(\overline{a}\shuffle b)).
\end{gather*}
\end{Lemma}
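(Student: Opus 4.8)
The plan is to derive Lemma~\ref{lem:ptkk} directly from the shuffle relation \eqref{eq:ptshuffle} together with the combinatorial identity \eqref{eq:kksha} recalled in Remark~\ref{rmk:shA}. Recall that \eqref{eq:kksha} says that, for an ordering $aib$, the element
\begin{gather*}
aib - (-1)^{|a|} i (\overline{a}\shuffle b)
\end{gather*}
lies in $\Sh_A$, i.e., it is a linear combination of nontrivial shuffle products. Since the broken Parke--Taylor functions annihilate every nontrivial shuffle product by \eqref{eq:ptshuffle}, applying $\pt(-)$ to this element gives $0$, which is exactly the claimed identity with $i = 1$.

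Concretely, first I would specialize Remark~\ref{rmk:shA} to the alphabet $A = \{1,\dots,n-1\}$ and to the distinguished letter $i = 1$, writing the ordering in the form $a1b$ with $a$ an initial word (possibly empty) and $b$ a terminal word (possibly empty). Then \eqref{eq:kksha} tells us that $a1b - (-1)^{|a|} 1(\overline{a}\shuffle b)$ is a sum of terms of the shape $u \shuffle v$ with $uv$ an ordering of $A$ and $u,v$ nonempty. Second, I would note that $\pt$ extends linearly to $W_A$ and that, by \eqref{eq:ptshuffle}, $\pt(u\shuffle v) = 0$ for every such term; hence $\pt$ kills the entire element of $\Sh_A$. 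Third, applying $\pt$ to the identity and rearranging gives $\pt(a1b) = (-1)^{|a|}\pt(1(\overline{a}\shuffle b))$, which is the statement of the lemma. (One should check the edge cases $|a|=0$, where the statement is trivially $\pt(1b)=\pt(1b)$, and $|b|=0$, where $\overline{a}\shuffle b = \overline{a}$ and the relation reduces to reversal of an initial segment; both are consistent.)

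There is essentially no obstacle here: the only genuine input is the citation-backed fact from Remark~\ref{rmk:shA} that the expressions \eqref{eq:kksha} span $\Sh_A$ (equivalently, the Klein--Ree / Schocker ``KK'' relations), and the fact \eqref{eq:ptshuffle} that $\pt$ annihilates $\Sh_A$; once both are granted, the lemma is immediate by linearity. If one preferred a self-contained argument avoiding Remark~\ref{rmk:shA}, the alternative would be a direct induction on $|a|$: for $|a|=0$ there is nothing to prove, and for the inductive step one writes $a = a'j$ for a single letter $j$, uses the recursive definition of the shuffle product to peel off $j$, and combines the two resulting shuffle identities; the sign $(-1)^{|a|}$ then appears from the reversal $\overline{a'j} = j\,\overline{a'}$ bookkeeping. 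I would present the short proof via Remark~\ref{rmk:shA}, since it is cleaner and the needed span statement is already cited.
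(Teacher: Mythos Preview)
Your proposal is correct and matches the paper's own argument: the paper simply states that, ``in view of~\eqref{eq:kksha}, \eqref{eq:ptshuffle} further implies'' the lemma, which is precisely your route of applying $\pt$ to the element $a1b - (-1)^{|a|}1(\overline{a}\shuffle b)\in\Sh_A$ and using that $\pt$ annihilates $\Sh_A$. Your discussion of edge cases and the alternative induction is more detailed than the paper, but the core idea is identical.
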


This lemma implies that open string partial tree amplitudes satisfy the so-called Kleiss--Kuijf relations~\cite{1988KK}.

If $a$ and $b$ are two disjoint words, then for any $i\in a$ and $j\in b$, the Lemma implies that
\begin{gather*}
\frac{1}{z_{ij}} \pt(a)\pt(b) = (-1)^{|a_2|+|b_1|} \pt \big((a_1\shuffle\overline a_2) i j \big(\overline b_1 \shuffle b_2\big)\big),
\end{gather*}
where $a = a_1ia_2$ and $b = b_1jb_2$. Recalling \eqref{eq:Sexplicit}, this implies that
\begin{Lemma}\label{lem:ptS} For $a$ and $b$ disjoint words as above,
\begin{gather*}
\pt(\{a,b\}) = \pt(a)\pt(b) E_{a,b},
\end{gather*}
where
\begin{gather*}
E_{a,b} = \sum_{i\in a, j\in b} \frac{s_{ij}}{z_{ij}}.
\end{gather*}
Note that $i\neq j$ for each term in the sum.
\end{Lemma}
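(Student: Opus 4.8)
The plan is to prove Lemma~\ref{lem:ptS} by unpacking the explicit formula \eqref{eq:Sexplicit} for $\{a,b\}$, applying $\pt(\cdot)$ term-by-term, and recognizing the result as $\pt(a)\pt(b)$ times the claimed sum. The key observation is that the text has already done most of the work: the displayed identity just above the Lemma,
\begin{gather*}
\frac{1}{z_{ij}} \pt(a)\pt(b) = (-1)^{|a_2|+|b_1|} \pt\big((a_1\shuffle\overline a_2)ij(\overline b_1\shuffle b_2)\big),
\end{gather*}
valid for $a = a_1ia_2$ and $b = b_1jb_2$, is precisely the per-term statement I need. So the argument is essentially a summation over the index set appearing in \eqref{eq:Sexplicit}.

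Concretely, I would proceed as follows. First, recall from \eqref{eq:Sexplicit} that
\begin{gather*}
\{a,b\} = \sum_{\substack{a=a_1ia_2\\b=b_1jb_2}} (-1)^{|a_2|+|b_1|}s_{ij}(a_1\shuffle\overline a_2)ij(\overline b_1\shuffle b_2),
\end{gather*}
where the sum runs over all ways of writing $a$ as a concatenation $a_1ia_2$ (singling out one letter $i\in a$ together with its position, which fixes $a_1$ and $a_2$) and similarly $b = b_1jb_2$. Since $\pt(\cdot)$ is linear on $W_A$, apply it to both sides: $\pt(\{a,b\}) = \sum (-1)^{|a_2|+|b_1|}s_{ij}\,\pt\big((a_1\shuffle\overline a_2)ij(\overline b_1\shuffle b_2)\big)$. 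Now substitute the displayed per-term identity above into each summand; the sign $(-1)^{|a_2|+|b_1|}$ cancels against its partner, leaving $\pt(\{a,b\}) = \sum_{i\in a,\,j\in b} s_{ij}\,\frac{1}{z_{ij}}\pt(a)\pt(b)$. Factoring out $\pt(a)\pt(b)$ gives $\pt(\{a,b\}) = \pt(a)\pt(b)\,E_{a,b}$ with $E_{a,b} = \sum_{i\in a,j\in b} s_{ij}/z_{ij}$, as claimed. The condition $i\neq j$ is automatic since $a$ and $b$ are disjoint.

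One small point to be careful about is the bookkeeping of the index set: in \eqref{eq:Sexplicit} the sum is parametrized by decompositions of the words, whereas in $E_{a,b}$ it is parametrized by pairs of letters $(i,j)\in a\times b$. These coincide because specifying which letter of $a$ is singled out, together with the fixed word $a$, uniquely determines the decomposition $a = a_1ia_2$ (and likewise for $b$); so the two index sets are in canonical bijection. I do not expect any genuine obstacle here — the lemma is a direct corollary of \eqref{eq:Sexplicit}, Lemma~\ref{lem:ptkk}, and the multiplicativity of $\pt$ over disjoint words implicit in the preceding display — so the only thing to get right is that the signs match and the summation index is correctly translated.
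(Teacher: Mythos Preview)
Your proposal is correct and follows essentially the same route as the paper: the paper derives the per-term identity $\frac{1}{z_{ij}}\pt(a)\pt(b) = (-1)^{|a_2|+|b_1|}\pt\big((a_1\shuffle\overline a_2)ij(\overline b_1\shuffle b_2)\big)$ from Lemma~\ref{lem:ptkk}, then invokes \eqref{eq:Sexplicit} and sums, exactly as you do. Your extra remark about the bijection between decompositions $a=a_1ia_2$, $b=b_1jb_2$ and pairs $(i,j)\in a\times b$ is a useful clarification the paper leaves implicit.
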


The equations $E_{a,b}=0$ are known as the \emph{scattering equations}.\footnote{Let $ab$ be an ordering of $1,\dots,n$, then the functions $E_{a,b}$ arise as derivatives of the \emph{Koba--Nielsen function},
\[
f_s(z)= \prod_{i=1,i<j}^n z_{ij}^{s_{ij}}.\]
 Indeed, \[
 E_{a,b} = \sum_{i \in a} \frac{\pa f_s(z)}{\pa z_i}.\]} Fix $A = \{1,\dots,n-1\}$, and take the ordering $a = 123\dots n-1$. Then repeated applications of Lemma~\ref{lem:ptS} give that
\begin{gather*}
\prod_{i=2}^{n-1} E_{i,123\dots i-1} = \sum_{b\in\fS(2\dots n-1)} S(12\dots n-1,1b) \pt(1b).
\end{gather*}
In this way, the components of the KLT matrix $S$ can be recovered from products of the func\-tions~$E_{a,b}$.

\begin{Remark}\label{rmk:bcj} Lemma \ref{lem:ptS} can also be used \cite{2012C,2014MS} to show that gauge theory partial tree amplitudes satisfy the \emph{fundamental BCJ relations} \cite{2008BCJ}
\begin{gather}\label{eq:funbcj}
A_\YM(\{i,a\},n) = \sum_{a=bjc} s_{ij} A_\YM\big(ij\big(\overline b \shuffle c\big),n\big) = 0.
\end{gather}
The open string partial tree amplitudes satisfy a more complicated relation of the form \cite{2009BBDV,2010Stie}
\[ (\alpha')^{(n-3)}A_\text{string}(\{i,a\},n) = O(\alpha').\]
\end{Remark}

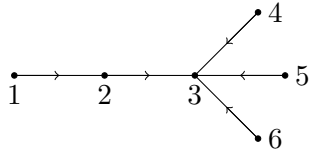
\begin{figure}[t]\centering
\begin{tikzpicture}[scale=1.2]
\draw (0,0)node[below]{1} -- (1,0)node[below]{2} -- (2,0)node[below]{3};
\draw[->] (0,0) -- (.5,0);
\draw[->] (1,0) -- (1.5,0);
\draw[->] (2.7,.7) -- (2.35,.35);
\draw[->] (2.7,-.7) -- (2.35,-.35);
\draw (2,0) -- (3,0)node[right]{5};
\draw[->] (3,0) -- (2.5,0);
\draw (2.7,.7)node[right]{4} -- (2,0) -- (2.7,-.7)node[right]{6};
\draw[fill] (0,0) circle [radius=0.03];
\draw[fill] (1,0) circle [radius=0.03];
\draw[fill] (2,0) circle [radius=0.03];
\draw[fill] (3,0) circle [radius=0.03];
\draw[fill] (2.7,.7) circle [radius=0.03];
\draw[fill] (2.7,-.7) circle [radius=0.03];
\end{tikzpicture}
\caption{The tree associated to the rational function $1/z_{12}z_{23}z_{34}z_{35}z_{36}$; and the orientation induced by designating $3$ a sink.}\label{fig:treebasis}
\end{figure}

Let $G$ be any spanning tree on the vertex set $\{1,2,\dots,n-1\}$. Designate $1$ to be the `sink' of~$G$. Then there is a unique assignment of directions to the edges of $G$, such that exactly \emph{one} edge incident on any vertex $i$ is outgoing, except for vertex $1$, which has only incoming edges (see Figure~\ref{fig:treebasis}). For a given vertex~$i$ in~$G$, let~$x(i)$ be the vertex connected to~$i$ by the one outgoing edge from~$i$.

\begin{Proposition}\label{prop:tree}
Let $G$ be as above, and fix $1$ to be the sink. Then the rational function associated to $G$ is the following product over the edges of $G$:
\begin{gather}\label{eq:TratfunG}
\prod_{\substack{{\rm edges} \\ i\rightarrow j}} \frac{1}{z_{ij}} = \prod_{i=2}^{n-2} \frac{1}{z_{ix(i)}},
\end{gather}
and this can be expanded as a sum
\begin{gather}\label{eq:Trat}
\prod_{\substack{{\rm edges} \\ i\rightarrow j}} \frac{1}{z_{ij}} = \sum_{\substack{a\\ x(i)<_{1a}i}} \pt(1a).
\end{gather}
The sum is over all orderings, $a$, such that $x(i)$ precedes $i$ in $1a$, for all $i$.
\end{Proposition}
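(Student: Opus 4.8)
The plan is to prove \eqref{eq:Trat} by induction on $n$, peeling off a leaf of the tree $G$. The identity \eqref{eq:TratfunG} is immediate: by construction every vertex $i\neq 1$ contributes exactly one outgoing edge $i\to x(i)$, so the edge set of $G$ is precisely $\{\,i\to x(i) : i=2,\dots,n-1\,\}$, and the product over edges is the stated product. (I note the paper writes the upper limit as $n-2$; with $A=\{1,\dots,n-1\}$ the vertices are $1,\dots,n-1$, so the product should run $i=2,\dots,n-1$ — I would use $n-1$.) So the real content is \eqref{eq:Trat}.

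For the induction, pick a leaf $v$ of $G$ other than $1$ — a vertex with exactly one incident edge. Since $v\neq 1$ that single edge must be $v\to x(v)$ (a leaf cannot have an incoming edge, as every non-root vertex needs an outgoing one and has only one edge total). Let $G'$ be $G$ with $v$ deleted; then $G'$ is a spanning tree on $\{1,\dots,n-1\}\setminus\{v\}$ with the induced orientation, still with sink $1$, and $x'(i)=x(i)$ for all surviving $i$. By the inductive hypothesis,
\begin{gather*}
\prod_{i\neq v}\frac{1}{z_{ix(i)}} = \sum_{\substack{a'\\ x(i)<_{1a'}i}} \pt(1a'),
\end{gather*}
the sum over orderings $a'$ of the surviving letters with $x(i)$ before $i$ in $1a'$. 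Multiplying both sides by $1/z_{v\,x(v)}$, I need to show
\begin{gather*}
\frac{1}{z_{v\,x(v)}}\,\pt(1a') = \sum_{\substack{a\ \text{obtained from }a'\text{ by}\\ \text{inserting }v,\ x(v)<_{1a}v}} \pt(1a),
\end{gather*}
and then sum over $a'$; since each valid ordering $a$ of all letters restricts to a unique valid $a'$, and the constraint "$x(i)<_{1a}i$ for all $i$" decomposes as "$x(i)<_{1a'}i$ for the old letters" together with "$x(v)<_{1a}v$", this reassembles to \eqref{eq:Trat} for $G$.

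So everything reduces to the one-letter insertion lemma: for any ordering $w=1a'$ of a letter set not containing $v$, and any letter $p=x(v)$ appearing in $w$,
\begin{gather*}
\frac{1}{z_{vp}}\,\pt(w) = \sum \pt(w\text{ with }v\text{ inserted somewhere after }p).
\end{gather*}
This is the heart of the matter and I expect it to be the main obstacle — though a short one. I would prove it directly from the definition $\pt(12\dots m)=\prod 1/z_{ii+1}$ and the partial-fraction identity $\frac{1}{z_{vp}z_{pq}} = \frac{1}{z_{vq}}\big(\frac{1}{z_{vp}}+\frac{1}{z_{pq}}\big)$ applied repeatedly: writing $w = u\,p\,q_1 q_2\cdots q_r$ (so $p$ sits somewhere inside $w$ with $q_1,\dots,q_r$ the letters after it), one has $\frac{1}{z_{vp}}\pt(w) = \frac{1}{z_{vp}}\,(\text{rest})\,\frac{1}{z_{pq_1}}$, and telescoping the partial fraction across $p,q_1,\dots,q_r$ converts the single factor $1/z_{vp}$ into the sum $\sum_{k=0}^{r} \pt(u\,p\,q_1\cdots q_k\, v\, q_{k+1}\cdots q_r)$, i.e.\ $v$ inserted in each of the $r+1$ slots at or after $p$. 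Alternatively — and perhaps more cleanly — I would derive it from Lemma~\ref{lem:ptS} (or its underlying shuffle identity \eqref{eq:ptshuffle}): the scattering-equation manipulations already in the paper contain exactly this kind of "insert one letter, weighted by $1/z_{vp}$" move, and Lemma~\ref{lem:ptkk} handles the bookkeeping of where the special letter $1$ sits. I would check the base case $n=3$ (or even $n=2$) by hand. The only genuine subtlety to get right is that the insertion slots for $v$ are exactly the positions \emph{weakly after} $p=x(v)$ in $1a'$, which matches the constraint $x(v)<_{1a}v$ in \eqref{eq:Trat} once one observes $v$ is inserted strictly after $p$.
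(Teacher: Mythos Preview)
Your argument is correct, and it proceeds by a genuinely different induction than the paper's. The paper works from the outside of the tree inward, merging whole branches at each branching vertex: if $i$ is a maximal vertex of valence $>1$ with incoming branches $a_1,\dots,a_k$, it invokes Lemma~\ref{lem:ptkk} to get
\[
\pt(ia_1)\,\pt(ia_2)\cdots\pt(ia_k)=\pt\big(i\,(a_1\shuffle a_2\shuffle\cdots\shuffle a_k)\big),
\]
and then repeats this as one moves down toward the sink. Your induction instead peels off a single leaf $v$ and uses a one-letter insertion identity, which is exactly the special case of the paper's step with one branch equal to the single letter $v$: writing $w=up\,c$ with $p=x(v)$, your lemma is $\pt(pv)\,\pt(pc)=\pt\big(p\,(v\shuffle c)\big)$, then premultiplied by the prefix $u$. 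What your route buys is that this single-letter case can be proved directly by the telescoping partial-fraction identity $\tfrac{1}{z_{ab}z_{bc}}=\tfrac{1}{z_{ac}}\big(\tfrac{1}{z_{ab}}+\tfrac{1}{z_{bc}}\big)$, without ever appealing to the shuffle vanishing~\eqref{eq:ptshuffle} or Lemma~\ref{lem:ptkk}; the paper's route is more structural and handles all branches at a vertex in one stroke. Your observation about the upper index in~\eqref{eq:TratfunG} is well taken. One caution: check the sign in your insertion lemma carefully against $\pt$'s definition (e.g.\ already for $w=p\,q_1$ one finds $\tfrac{1}{z_{pv}}\pt(pq_1)=\pt(pvq_1)+\pt(pq_1v)$, which is the identity with $z_{pv}$ rather than $z_{vp}$); this is the same overall sign convention that governs~\eqref{eq:Trat} itself, so it does not affect the validity of your induction, only the bookkeeping.
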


\begin{proof}
This follows by repeated applications of Lemma \ref{lem:ptkk}. The orientation of $G$ induces a~partial order on the vertices, with~$1$ the smallest. Let~$i$ be one of the largest vertices with valence greater than~$1$, and suppose that~$i$ has~$k$ incoming edges. All vertices greater than~$i$ have valence~$1$, so that the tree `greater than~$i$' is comprised of some number of `branches', as in Figure~\ref{fig:branches}. By the lemma,
\begin{gather*}
\pt(ia)\pt(ib) = (-1)^{|b|} \pt\big(\ol{b}ia\big) = \pt(i (a\shuffle b)),
\end{gather*}
and a product of $k$ branches gives
\begin{gather*}
\pt(ia_1)\pt(ia_2)\cdots \pt(ia_k) = \pt(i (a_1\shuffle a_2\shuffle \dots \shuffle a_k)).
\end{gather*}
Moving `down the tree' gives the identity.
\end{proof}

\begin{Remark}\label{rmk:ratfuncite}
Formulas related to \eqref{eq:Trat} appear in the discussion of hyperplane arrangements in~\cite{1991SV}. The functions, \eqref{eq:TratfunG}, associated to a spanning tree $G$ are studied in \cite{2021HHTZ}, with interesting applications to CHY formulas.
\end{Remark}

\begin{figure}[t]\centering
\begin{tikzpicture}[scale=1.2]
\draw (0,0) -- (1,3) node[above]{$i$} -- (1,0) -- (1,3) -- (3,0);
\node at (1.7,1) {$\ldots$};
\draw[fill] (0,0) circle [radius=0.03];
\draw[fill] (1,3) circle [radius=0.03];
\draw[fill] (1,0) circle [radius=0.03];
\draw[fill] (3,0) circle [radius=0.03];
\draw[fill] (.5,1.5) circle [radius=0.03];
\draw[fill] (1,1) circle [radius=0.03];
\draw[fill] (1,2) circle [radius=0.03];
\end{tikzpicture}
\caption{Outermost branches of a tree.}\label{fig:branches}
\end{figure}
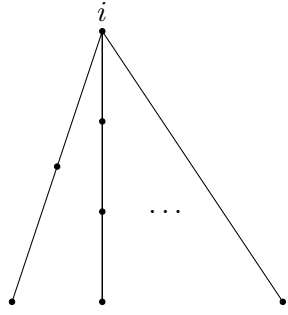

\section{Gauge and gravity tree amplitudes in BCJ form}\label{sec:4}

This section uses the identities in Section~\ref{sec:3} to obtain formulas for the tree partial amplitudes of the non-linear sigma model (NLSM), and also Yang--Mills in four dimensions. Moreover, the field theory KLT relation makes it possible to derive formulas also for the tree amplitudes of Einstein gravity, and Dirac-Born-Infeld theory, which are the gravity theories associated to YM and NLSM, respectively. There have been several previous studies that obtain formulas of this kind from CHY integrals, including \cite{2016BBBDF,2017DT,2021HHTZ}. The approach taken here is novel, and combines applications of the matrix tree theorem with the identities proved in Section~\ref{sec:3}.

\subsection{CHY formulas}\label{sec:43}
The identities in Section~\ref{sec:3} are relevant to the problem of computing tree amplitudes. This is because of the CHY formulas for the partial tree amplitudes of NLSM and YM. In~\cite{2013JulyCHY}, these formulas are written as integrals of the form
\begin{gather}\label{eq:chyintegrals}
A = \int \d\mu(a) \left(z_{ij}z_{jk}z_{ki} \prod_{\substack{l=1 \\ l\neq i,j,k}}^n \delta(E_l) \right) I,
\end{gather}
for functions $I$ of appropriate weight under the action of $\SL_2\bC$, and for any fixed choice of~$i$,~$j$,~$k$. $\SL_2\bC$ acts by M\"obius transformations on the coordinates~$z_i$. It is convenient to write
\begin{gather*}
z_{ij}z_{jk}z_{ki} \prod_{\substack{l=1 \\ l\neq i,j,k}}^n \delta(E_l) = {\prod_{l}}' \delta(E_l),
\end{gather*}
for any choice of $i$, $j$, $k$. The functions
\begin{gather}\label{eq:scatagain}
E_i = E_{i,12\dots \hat i\dots n} = \sum_{\substack{j=1\\ j\neq i}}^n \frac{s_{ij}}{z_{ij}}
\end{gather}
are the scattering equation functions introduced in Section~\ref{sec:3}. The natural logarithmic top forms on $(\bC\bP^1)^n_*$ induce volume forms, $\d\mu(a)$, on $\cM_{0,n}$: for $a=12\dots n$, the associated $\cM_{0,n}$ top form is\footnote{For any choice of a top form, $\operatorname{Vol} \PSL_2 \bC$, on the fibres of the projection $\big(\bC\bP^1\big)^n_* \rightarrow \cM_{0,n}$, such as, for example, \[ z_{ij} z_{jk}z_{ki}\, \d z_i \d z_j \d z_k,\] for any distinct~$i$,~$j$,~$k$.}
\begin{gather*}
\d\mu(123\dots n) = \frac{1}{\operatorname{Vol} \PSL_2\bC} \bigwedge_{i=1}^n \d\log z_{ii+1},
\end{gather*}
so that $\d\mu(123\dots n)$ is a $n-3$ top form on $\cM_{0,n}$ (regarded as the quotient of $\big(\bC\bP^1\big)^n_*$ by $\PSL_2 \bC$). Or, choosing the gauge fixing $z_1=0$, $z_{n-1}=1$, $z_n = \infty$,
\begin{gather*}
\d\mu(123\dots n) = \left.\frac{\d z_2 \d z_3 \cdots \d z_{n-2}}{ \prod_{i=1}^{n-2} z_{i i+1}}\right|_{z_1=0,\, z_{n-1}=1}.
\end{gather*}
The formula, \eqref{eq:chyintegrals}, must be understood as a sum of residues at the solutions of $E_i=0$. This can be written as
\begin{gather*}
\int \d\mu(a) \left({\prod_{l}}' \delta(E_l) \right) I = \sum_{\substack{\text{solutions} \\ (z)}} \operatorname{Res}_{(z)} ~ \left( \frac{\d\mu(a) I }{\det(\pa_i E_j)} \right),
\end{gather*}
where the sum is over all solutions, $(z_i^*)$, to the equations $E_i = 0$, and $\operatorname{Res}_{(z)}$ denotes the Poincar\'e residue at some given solution, $(z_i^*)$. It is left implicit in these formulas that the residue at a~solution is oriented by the form
\begin{gather*}
+\d\log E_1 \wedge \cdots \wedge \d \log E_{n-3}.
\end{gather*}

An important example is the CHY formula for the \emph{biadjoint scalar} partial tree amplitudes. In the notation of Section~\ref{sec:2}, for two orderings $a,b \in \fS(1,2,\dots,n-1)$, these partial amplitudes are
\begin{gather*}
m(a,n|b,n) = s_a (a,T(b)) = \sum_{\substack{\text{trees} \\ \alpha}} \frac{(a,\alpha)(b,\alpha)}{s_\alpha},
\end{gather*}
where the sum is over all rooted binary trees with $n-1$ labelled external edges (not including the root). The CHY formula for these amplitudes is,
\begin{gather}\label{eq:biadjoint}
m(a,n|b,n) = \int \d\mu(a)\left({\prod_{l}}' \delta(E_l) \right)\PT(b,n),
\end{gather}
where the `Parke--Taylor function' $\PT(a,n)$ is, for $a=12\dots n-1$,
\begin{gather*}
\PT(a,n) = \frac{1}{z_{12}z_{23} \cdots z_{n-1n} z_{n1}}.
\end{gather*}
Equation \eqref{eq:biadjoint} is proved in~\cite{2013DG}.

\subsection{Nonlinear sigma model}

Define an $n\times n$ matrix, $A$, with off-diagonal entires ($i\neq j$)
\begin{gather*}
A_{ij} = \frac{s_{ij}}{z_{ij}},
\end{gather*}
and diagonal entires
\begin{gather*}
A_{ii} = - \sum_{\substack{j=1 \\ j\neq i}}^n A_{ij}.
\end{gather*}
The row sums of $A$ clearly vanish. Write $A[i,j]$ for the matrix obtained by removing rows~$i$,~$j$ and columns~$i$,~$j$ from the matrix. The tree partial amplitudes for the non-linear sigma model (NLSM) are given by a CHY formula with integrand \cite{2015CHY}
\begin{gather}\label{eq:chynlsm}
A_{\text{NLSM}}(a,n) = \int \d\mu(a) \left({\prod_l}' \delta(E_l) \right) \,\frac{{\det}A[i,j]}{z_{ij}^2},
\end{gather}
for some choice of $i$, $j$.

To evaluate \eqref{eq:chynlsm}, it is convenient to choose $i=1$, $j = n$, and to gauge fix, say, $z_1=0$, $z_{n-1}=1$, $z_n = \infty$. With these choices, the integrand simplifies to
\begin{gather*}
A_{\text{NLSM}}(a,n) = \int \d^{n-3}z\,\pt(a) \left({\prod_l}' \delta(E_l) \right) \,{\det}A[1,n],
\end{gather*}
where now the diagonal entires of $A[1,n]$ are
\begin{gather*}
A_{ii} = - \sum_{\substack{j=1 \\ j\neq i}}^{n-1} A_{ij}.
\end{gather*}
To evaluate ${\det}A[1,n]$ in this limit, Kirchoff's matrix tree theorem gives~\cite{1977CK}
\begin{gather}\label{eq:Adetexpand}
\det A[1,n] = \sum_{\substack{\text{spanning}\\ \text{trees,} \\G}} \prod_{\substack{\text{edges,}\\ i\rightarrow j}} \frac{s_{ij}}{z_{ij}},
\end{gather}
where the orientations of the edges of $G$ are determined by designating vertex $1$ the sink, as in the paragraph before Proposition~\ref{prop:tree}. Given this, Proposition \ref{prop:tree} implies that
\begin{gather}\label{eq:Adettwo}
\det A[1,n] = \sum_{\substack{\text{spanning}\\ \text{trees,} \\G}} \prod_{i=2}^{n} s_{ix_i} \sum_{\substack{a\\ x_i<_{1a}i}} \pt(1a),
\end{gather}
where $x_i$ is the vertex in $G$ reached from vertex $i$ along an outgoing edge, and the second sum is over all permutations $a \in \fS(2,\dots,n-1)$ such that $x_i$ precedes $i$ in $1a$ for all $i$. After reversing the two summations,
\eqref{eq:Adettwo} becomes
\begin{gather*}
\det A[1] = \sum_{a \in \fS_{n-2}} \pt(1a) \prod_{i=2}^{n} \sum_{j<_{1a} i} s_{ij}.
\end{gather*}
The sum in the product is over all letters appearing before $i$ in the ordering~$1a$. The product appearing in this sum can be identified with components of the KLT map, using equation~\eqref{eq:Scomponent}:
\begin{gather*}
\prod_{i=2}^{n} \sum_{j<_{1a} i} s_{ix_i} = S(1a,1a).
\end{gather*}
So the NLSM partial amplitudes, \eqref{eq:chynlsm}, are
\begin{gather}\label{eq:ANLSM}
A_{\text{NLSM}} (a,n) = \sum_{b\in\fS_{n-2}} S(1b,1b) m(1b,n|a,n).
\end{gather}
As in \cite{2015CHY}, this result can be substituted into the KLT relation (equation \eqref{eq:kltrelation}) to obtain the tree amplitudes of the special Galileon theory, which is
\begin{gather*}
M_{\text{SG}} = \lim_{s_{1a}\rightarrow 0} \frac{1}{s_{1a}} \sum_{a\in\fS_{n-2}} A_{\text{NLSM}} (1a,n) S(1a,1b) A_{\text{NLSM}} (1b,n).
\end{gather*}
Using Proposition \ref{prop:TS} (or equation~\eqref{eq:TTST}), together with~\eqref{eq:ANLSM}, the formula for $M_{\text{SG}}$ can be written compactly as
\begin{gather}\label{eq:DBIKLT2}
M_{\text{SG}} = \sum_{a\in\fS_{n-2}} S(1a,1a) A_{\text{NLSM}} (1a,n).
\end{gather}

\begin{Proposition}\label{prop:nlsm}
The NLSM and special Galileon tree amplitudes can be expressed as the following sums over binary trees:
\begin{gather*}
A_{{\rm NLSM}} (a,n) = \sum_{\substack{{\rm trees,}\\ \alpha}} \frac{ (a,\alpha) n_\alpha}{s_\alpha}\qquad \text{and}\qquad M_{{\rm SG}} = \sum_{\substack{{\rm trees,}\\ \alpha}} \frac{ n_\alpha n_\alpha}{s_\alpha},
\end{gather*}
where
\begin{gather}\label{eq:nalphaS}
n_\alpha = \sum_{b\in \fS_{n-2}} (1b,\alpha) S(1b,1b).
\end{gather}
The numerators $n_\alpha$ have no poles in the~$s_I$ variables, and the replacement $\alpha \mapsto n_\alpha$ defines a~homomorphism out of~$\cL_A$. Numerators \end{Proposition}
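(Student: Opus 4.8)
The plan is to exhibit the claimed BCJ-form expansions and the properties of $n_\alpha$ by feeding the results already established into the map $S$. First I would recall that, by the CHY/matrix-tree computation carried out just above, $A_{\mathrm{NLSM}}(a,n) = \sum_{b\in\fS_{n-2}} S(1b,1b)\, m(1b,n|a,n)$, and that $m(1b,n|a,n) = s_a\,(a,T(1b)) = s_a\,(1b,T(a))$ by the self-adjointness of $T$, equation~\eqref{eq:Tself}. Expanding $T(a) = \sum_\alpha (a,\alpha)\alpha/\tilde s_\alpha$ and using that the $(n-1)!$ monomials $\ell(1b)$ and words $1b$ are dual bases, I would get $m(1b,n|a,n) = s_a\sum_\alpha (1b,\alpha)(a,\alpha)/\tilde s_\alpha$. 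Substituting and interchanging sums, with $s_\alpha := \tilde s_\alpha/s_a$ (the product of the \emph{internal} propagators) — I should double-check this normalization against~\eqref{eq:salpha} and the definition of $m$ in Section~\ref{sec:43} — yields $A_{\mathrm{NLSM}}(a,n) = \sum_\alpha \tfrac{(a,\alpha)}{s_\alpha}\big(\sum_{b\in\fS_{n-2}} (1b,\alpha) S(1b,1b)\big)$, which is exactly the first claimed formula with $n_\alpha$ as in~\eqref{eq:nalphaS}. The special Galileon formula then follows by substituting this BCJ form into~\eqref{eq:DBIKLT2}, $M_{\mathrm{SG}} = \sum_a S(1a,1a) A_{\mathrm{NLSM}}(1a,n)$, and recognizing $\sum_a S(1a,1a)(1a,\alpha) = n_\alpha$ once more, producing $M_{\mathrm{SG}} = \sum_\alpha n_\alpha n_\alpha / s_\alpha$.

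Next I would argue that $n_\alpha$ has no poles in the $s_I$. The point is that $n_\alpha = \sum_{b} (1b,\alpha) S(1b,1b)$, and each $S(1b,1b)$ is, by~\eqref{eq:Scomponent}, a \emph{product} $\prod_{i=2}^{n-1}\sum_{j<_{1b}i} s_{ij}$ — a polynomial in the Mandelstam variables $s_{ij}$, hence in the $s_I$ via the linear relations~\eqref{eq:linear}. A sum of polynomials is a polynomial, so $n_\alpha$ lies in the polynomial subring of $M_A$; in particular it has no poles. (One could say a little more about its degree — it is homogeneous of degree $n-2$ in the $s_{ij}$ — but that is not needed for the statement.)

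Finally, for the homomorphism claim: the assignment $\alpha \mapsto n_\alpha$ is the composite $\cL_A \xrightarrow{S} \cL_A^\vee\otimes M_A \xrightarrow{\langle \cdot,\,\Sigma\rangle}$ ... more precisely, $n_\alpha = \big(S(\alpha), \sum_b S(1b,1b)\,\ell(1b)\big)$ is the pairing of $S(\alpha)$ against a fixed element; but the cleanest route is to use that $S$ intertwines the Lie bracket $[\,,\,]$ on $\tilde\cL_A$ with the bracket $\{\,,\,\}$ on $\tilde\cL_A^\vee$ (this is~\eqref{eq:intertwine} together with Proposition~\ref{prop:TS}), and that evaluating a "$\pt$-like" linear functional on $\{a,b\}$ reproduces the NLSM amplitude relations — concretely, by Lemma~\ref{lem:ptS} and the identity $\prod_{i=2}^{n-1} E_{i,12\dots i-1} = \sum_b S(12\dots n-1,1b)\pt(1b)$, the functional $\alpha\mapsto \sum_b (1b,\alpha)S(1b,1b)$ is (up to the overall kinematic prefactor) the "evaluation at a scattering-equation solution" functional composed with $S$, which is multiplicative on brackets because the scattering equations $E_{a,b}$ satisfy $E_{a,b}\,E_{c,d}$-type Leibniz behavior under the S-map. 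I would make this precise by checking $n_{[\alpha,\beta]} = $ the appropriate bracket of $n_\alpha, n_\beta$ in the target; since the target is the abelian-in-$M_A$ polynomial ring, "homomorphism" here should mean that $n$ descends through the Lie-coalgebra structure, i.e. $n$ kills $\ker S^{-1}$-type relations — equivalently, $n_\alpha$ only depends on $\alpha$ through its image in $\cL_A$, which is automatic, \emph{and} the generating relations among the $A_{\mathrm{NLSM}}$ (the fundamental BCJ relations, Remark~\ref{rmk:bcj}) are respected.

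The main obstacle I anticipate is pinning down precisely what "homomorphism out of $\cL_A$" means as a target and verifying the bracket-compatibility: the statement in the excerpt is slightly terse (and the sentence "Numerators" is clearly truncated), so I would first settle on the target — most likely the claim is that $\alpha\mapsto n_\alpha$ factors through $\cL_A$ and is compatible with the S-map bracket, so that $n_{\{a,b\}}$ equals the Mandelstam-weighted bracket of $n_a$ and $n_b$ — and then the verification reduces to the intertwining property~\eqref{eq:intertwine} of $S$ plus the multiplicativity of the scattering-equation functions under the S-map (Lemma~\ref{lem:ptS}). The bookkeeping of signs and of the propagator normalization $s_\alpha$ versus $\tilde s_\alpha$ is the only genuinely fiddly part; everything else is assembling results already proved in Sections~\ref{sec:2} and~\ref{sec:3}.
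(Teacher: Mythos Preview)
Your derivation of the two tree expansions is correct and is essentially what the paper does: the proposition is stated as a summary of the computation in equations~\eqref{eq:ANLSM} and~\eqref{eq:DBIKLT2}, and converting those into the sums over $\alpha$ is exactly the substitution $m(1b,n|a,n)=\sum_\alpha (1b,\alpha)(a,\alpha)/s_\alpha$ followed by an interchange of sums, just as you outline. Your ``no poles'' argument is also the right one and matches the paper's implicit reasoning: each $S(1b,1b)$ is visibly a polynomial by~\eqref{eq:Scomponent}.

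Where you go astray is the homomorphism claim. You try to interpret ``homomorphism out of $\cL_A$'' as bracket-compatibility with some Lie structure on the target, and then reach for Lemma~\ref{lem:ptS}, the intertwining relation~\eqref{eq:intertwine}, and the scattering equations to verify $n_{[\alpha,\beta]}=\{n_\alpha,n_\beta\}$ in some sense. This is a misreading. The target $M_A$ carries no relevant bracket; the claim is only that $\alpha\mapsto n_\alpha$ is a well-defined \emph{linear} map on $\cL_A$, i.e.\ that the numerators respect the antisymmetry and Jacobi relations among Lie monomials (this is what ``BCJ numerators'' means in the sense of~\cite{2008BCJ}, as the paper recalls in Section~\ref{sec:5}). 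But that is immediate from the formula~\eqref{eq:nalphaS}: $n_\alpha=\sum_b (1b,\alpha)S(1b,1b)$ is manifestly linear in $\alpha$ because the pairing $(1b,\,\cdot\,)$ is. No appeal to the S-map bracket or to scattering-equation identities is needed, and the paper accordingly offers no separate argument for this point. Your worry about the $s_\alpha$ versus $\tilde s_\alpha$ normalization is legitimate bookkeeping but not a real obstacle; the rest of your ``main obstacle'' paragraph is chasing a phantom.
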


The formula \eqref{eq:nalphaS} was found also in~\cite{2017CMS,2020Mafra}; the derivation here is new. One interest of this derivation is that the methods used here can also be easily adapted to study other CHY integrals. The next section discusses the case of Yang--Mills gauge theory in four dimensions.

\subsection{Yang--Mills}

Some of the methods used in the previous section to study NLSM amplitudes also lead to results about Yang--Mills tree amplitudes. This is because there exist formulas for Yang--Mills tree amplitudes (in four dimensions) that involve determinants similar to those computed above. This section first recalls these formulas, and then manipulates them using the identities from Section~\ref{sec:3}.

In four dimensions, gluons have two helicity states, and it is conventional to further refine the partial amplitude decomposition by specifying the helicities of the gluons. Using spinors, the null momenta $k_i$ may be written as $k_i^{\alpha\dot\alpha} = \lambda_i^\alpha \tilde\lambda_i^{\dot\alpha}$, unique up to a complex rescaling $\lambda_i \rightarrow \alpha \lambda_i$, $\tilde\lambda_i \rightarrow \alpha^{-1} \tilde\lambda_i$ (with $\alpha \neq 0$). The two helicities correspond two polarizations $\epsilon_+^{\alpha\dot\alpha} \propto \lambda^\alpha \tilde\xi^{\dot\alpha}$ and $\epsilon_-^{\alpha\dot\alpha} \propto\xi^{\alpha} \tilde\lambda^{\dot\alpha}$, for some reference spinors $\xi$ and $\dot\xi$. In practice, the partial amplitudes themselves do not depend on the choice of reference spinors, and are functions of the invariants
\begin{gather*}
\avg{ij} = \lambda_i^\alpha \lambda_j^\beta \epsilon_{\alpha\beta},\qquad [ij] = \tilde\lambda_i^{\dot\alpha} \tilde\lambda_j^{\dot\beta} \epsilon_{\dot\alpha\dot\beta},
\end{gather*}
where $\epsilon_{12} = - \epsilon_{21} = 1$. Fix $k$ gluons `$1,\dots,k$' with $+$ helicity, and $n-k$ gluons `$k+1,\dots,n$' with $-$ helicity.

The CHY-like formulas for 4D amplitudes that we will consider are sums over solutions to so-called `polarized scattering equations', which have fewer solutions than the scattering equations. To write these equations, it is helpful to define the following two spinor-valued functions on~$\bC\bP^1$:
\begin{gather*}
\lambda^\alpha(z) = \sum_{j=k+1}^n \frac{t_j \lambda_j^\alpha}{z-z_j}\qquad\text{and}\qquad \tilde\lambda^{\dot\alpha}(z) = \sum_{i=1}^k \frac{t_i \tilde \lambda^{\dot\alpha}_i}{z - z_i},
\end{gather*}
where the $t_i$ are non-zero complex scalars. Then the polarized scattering equations are
\begin{gather}
\lambda_i^\alpha - t_i \lambda^\alpha(z_i) = 0, \qquad \text{for } i = 1,\dots,k, \nonumber\\
\tilde\lambda^{\dot\alpha}_j - t_j \tilde\lambda^{\dot\alpha}(z_j) = 0, \qquad \text{for } j = k+1, \dots, n.\label{eq:polscat2}
\end{gather}
Solutions to these equations are also solutions to the `original scattering equations', \eqref{eq:scatagain}, because \eqref{eq:polscat2} ensure that the following residues vanish:
\begin{gather*}
\Res_{z=z_i}\, k_i^{\alpha\dot\alpha} \lambda_\alpha(z) \tilde\lambda_{\dot\alpha}(z).
\end{gather*}
The equations \eqref{eq:polscat2} also imply that the spinor data satisfies momentum conservation,
\begin{gather*}
\sum_{i=1}^n \lambda_i^\alpha \tilde\lambda_i^{\dot\alpha} = 0,
\end{gather*}
which can be checked by breaking the sum into two parts: $i=1,\dots,k$ and $i = k+1,\dots,n$.

In order to present the formula for 4D gravity amplitudes, first define the $k\times k$ Hodges' matrix, $H$: the off-diagonal entries are~\cite{2013CS,2012H}
\begin{gather*}
H_{ij} = \frac{ t_it_j \avg{ij}}{z_{ij}} \qquad \text{for $i\neq j$},
\end{gather*}
and the diagonal entries are
\begin{gather*}
H_{ii} = - \sum_{\substack{i=1\\j\neq i}}^{k} H_{ij}.
\end{gather*}
Likewise, define the $(n-k)\times (n-k)$ matrix $\tilde H$ to have entries
\begin{gather*}
\tilde{H}_{ij} = \frac{\tilde t_i \tilde t_j [ij]}{z_{ij}} \qquad \text{for $i\neq j$},
\end{gather*}
and
\begin{gather*}
\tilde{H}_{ii} = - \sum_{\substack{j=n-k\\ j\neq i}}^{n} \tilde{H}_{ij}.
\end{gather*}

Fix some $a$ in $1,\dots,k$, and fix some $b$ in $k+1,\dots,n$. Write $H[a]$ for the matrix formed by removing the $a^{\text{th}}$ row and column from $H$. Likewise for $\tilde H[b]$. Given these Hodges' matrices, the 4D gravity tree amplitude can be written as
\begin{gather}\label{eq:specGR}
M_{\text{GR}} = \int \d\mu\, \frac{{\det} H[a] {\det} \tilde H[b]}{\prod_{i=1}^n t_i^2} \prod_{\substack{i=1\\i\neq a}}^k \delta^2(\lambda_i - t_i \lambda(z_i)) \prod_{\substack{j=k+1\\j\neq b}}^n \delta^2\big(\tilde \lambda_i - \tilde t_i \tilde \lambda(z_i)\big).
\end{gather}
This formula as given can be found in \cite{2015ACRS,2014GLM}, and is proved in \cite{2016Geyer}. It is equivalent to and closely related to the Cachazo--Skinner formula \cite{2014CMS,2013CS}, and also to the RSVW formula~\cite{2004RSV}. The 4D Yang--Mills tree partial amplitude can likewise be written as
\begin{gather}\label{eq:specym}
A_{\text{YM}}(an) = \int \d\mu\, \PT(an) \prod_{\substack{i=1\\i\neq a}}^k \delta^2(\lambda_i - t_i \lambda(z_i)) \prod_{\substack{j=k+1\\j\neq b}}^n \delta^2\big(\tilde \lambda_i - \tilde t_i \tilde \lambda(z_i)\big).
\end{gather}
$\GL_2\bC$ acts on the integrands in~\eqref{eq:specGR} and~\eqref{eq:specym} by matrix multiplication on the pair $(z_i/t_i, 1/t_i)$. In both formulas, the measure~$\d\mu$ is given by
\begin{gather*}
\d\mu = \frac{1}{\text{Vol} \GL_2\bC} \prod_{i=1}^n \frac{\d z_i \d t_i}{t_i},
\end{gather*}
so that $\d\mu$ is a top dimensional $2n-4$ form. More explicitly, adopting the gauge fixing $z_1=0$, $z_2=1$, $z_n = \infty$, $t_n = 1$,
\begin{gather*}
\d\mu = \left.\prod_{i=3}^{n-1} \frac{\d z_i \d t_i}{t_i}\right|_{z_1=0,\, z_2=1, \,z_n = \infty,\, t_n = 1}.
\end{gather*}

\begin{Remark}\label{rmk:glcov}
By defining homogeneous coordinates $\sigma_i = (z_i/t_i, 1/t_i)$, it is possible to write the integrands of \eqref{eq:specGR} and \eqref{eq:specym} in $\GL_2\bC$-covariant form, using the pairing
\begin{gather*}
(\sigma_i,\sigma_j) = \frac{z_i - z_j}{t_i t_j}.
\end{gather*}
These leads to the formulas in the form presented in \cite{2015ACRS,2014GLM}.
\end{Remark}

The rest of this section uses the tools from Section~\ref{sec:3} to expand the integrand in \eqref{eq:specGR}, in order to write $M_{\text{GR}}$ as a sum
\begin{gather}\label{eq:expandnym}
M_{\text{GR}} = \sum_{a\in \fS_{n-2}} N_{\text{YM}}(1an) A_\text{YM}(1an).
\end{gather}
Computing the coefficients of this expansion, $N_{\text{YM}}(1an)$, also suffices to compute $A_\text{YM}$ itself: this follows from the KLT relation, \eqref{eq:kltrelation}, as seen in the case of NLSM amplitudes, in equations~\eqref{eq:ANLSM} to~\eqref{eq:DBIKLT2}.

To compute $N_{\text{YM}}(1an)$, the first step is to expand the determinants $\det H[a]$ and $\det\tilde H[b]$. The can be done using Kirkchoff's tree theorem, as in~\eqref{eq:Adetexpand}, above, with the difference that $H$ is a \emph{symmetric} matrix, whereas~$A$ is not. Fix some $a$ from $1,\dots,k$. Then the determinant of~$H[a]$ is
\begin{gather}\label{eq:Hdetexpand}
\det H[a] = \sum_{\substack{\text{trees} \\ G}} \prod_{\substack{\text{edges}\\i-j}} \frac{t_i t_j \avg{ij}}{z_{ij}},
\end{gather}
where the sum is over all spanning trees, $G$, of the vertex set $1,\dots,k$. The Hodges matrix $H$ is symmetric, so it is not necessary to orient the edges in order for \eqref{eq:Hdetexpand} to be well defined. Also note that the result, \eqref{eq:Hdetexpand}, is independent of the choice of~$a$. A single summand in~\eqref{eq:Hdetexpand}, for a tree $G$, can also be written as
\begin{gather*}
\left( \prod_{i=1}^k t_i^{d_i} \right) \prod_{\substack{\text{edges}\\i-j}} \frac{\avg{ij}}{z_{ij}},
\end{gather*}
where $d_i$ is the degree of the vertex $i$ in $G$. The determinant $\det' \tilde H$ is a sum of similar such terms. Fixing some $b$ from $k+1,\dots,n$,{\samepage
\begin{gather*}
\det \tilde H[b] = \sum_{\substack{\text{trees} \\ G}} \, \left( \prod_{j=k+1}^n t_j^{d_j} \right) \prod_{\substack{\text{edges}\\i-j \\ \text{in $G$}}} \frac{[ij]}{z_{ij}},
\end{gather*}
where $d_j$ is the degree of the vertex $j$ in the spanning tree $G$.}

The full amplitude, $M_\text{GR}$, can therefore be expanded as a sum over pairs of trees $(G,G')$, with $G$ spanning vertex set $1,\dots,k$ and $G'$ spanning vertex set $k+1,\dots,n$. Explicitly,
\begin{gather}\label{eq:MGRfullexpand}
M_\text{GR} = \sum_{\substack{\text{trees}\\G,G'}} \int \d\mu\, I_{G,G'} \prod_{\substack{i=1\\i\neq a}}^k \delta^2(\lambda_i - t_i \lambda(z_i)) \prod_{\substack{j=k+1\\j\neq b}}^n \delta^2\big(\tilde \lambda_i - \tilde t_i \tilde \lambda(z_i)\big),
\end{gather}
where
\begin{gather*}
I_{G,G'} = \left( \prod_{i=1}^n t_i^{d_i-2} \right) \left(\prod_{\substack{i-j \\ \text{in $G$}}} \frac{\avg{ij}}{z_{ij}} \right) \left(\prod_{\substack{i-j \\ \text{in $G'$}}} \frac{[ij]}{z_{ij}} \right).
\end{gather*}
The \looseness=-1 coefficients $n_{\text{YM}}(1a)$ in \eqref{eq:expandnym} can in principle be computed from \eqref{eq:MGRfullexpand} in two steps. First, it is necessary to express the $t_i$ in terms of the $z_i$ by solving for them using the polarized scattering equations. Second, the integrands $I_{G,G'}$ should be expanded in Parke--Taylor factors using the identities in Section \ref{sec:3}. The resulting terms can then be regrouped to give a sum of the form \eqref{eq:expandnym}. This is carried out for the maximal-helicity-violating (MHV) case in the following subsection.

\subsection{Maximal-helicity-violating Yang--Mills amplitude}

The Maximal-helicity-violating (MHV) case is $k=2$, when only two gluons are $+$ helicity, and $n-2$ are $-$ helicity. In this case, the matrix tree expansion, \eqref{eq:MGRfullexpand}, simplifies to the following sum over spanning trees on $3,\dots,n$:
\begin{gather*}
M_\text{GR} = \sum_{\substack{\text{trees}\\G}} \int \d\mu\, I_{G} \prod_{\substack{i=1\\i\neq a}}^k \delta^2(\lambda_i - t_i \lambda(z_i)) \prod_{\substack{j=k+1\\j\neq b}}^n \delta^2\big(\tilde \lambda_i - \tilde t_i \tilde \lambda(z_i)\big),
\end{gather*}
where
\begin{gather*}
I_{G} = \frac{\avg{12}}{z_{12}} \left(\frac{1}{t_1t_2} \prod_{i=3}^n t_i^{d_i-2} \right) \left(\prod_{\substack{i-j \\ \text{in $G$}}} \frac{[ij]}{z_{ij}} \right).
\end{gather*}
A tree on $n-2$ vertices has $n-3$ edges, so that
\begin{gather*}
\sum_{i=3}^n (d_i -2) = 2(n-3) - 2(n-2) = -2.
\end{gather*}
It follows that $I_G$ may also be written as
\begin{gather*}
I_{G} = \frac{\avg{12}}{z_{12}} \left(\frac{t_1}{t_2} \prod_{i=3}^n (t_1t_i)^{d_i-2} \right) \left(\prod_{\substack{i-j \\ \text{in $G$}}} \frac{[ij]}{z_{ij}} \right).
\end{gather*}
Now fix some $b$ from $3,\dots,n$. Choosing $b$ to be a source vertex induces an orientation of each spanning tree $G$, such that every vertex (apart from~$b$) has exactly~$1$ incoming edge. Given that~$G$ is oriented this way, $I_G$ may be further re-written as
\begin{gather}\label{mhv:IG3}
I_{G} = \frac{\avg{12}}{z_{12}} \left(\frac{t_1}{t_2} (t_1t_b)^{-2} \prod_{\substack{i=3\\ i\neq b}}^n (t_1t_i)^{-1} \right) \left(\prod_{\substack{i\rightarrow j \\ \text{in $G$}}} \frac{(t_1t_i) [ij]}{z_{ij}} \right).
\end{gather}

Having used the matrix tree theorem to evaluate $I_G$, the next step is to use the polarized scattering equations to solve for the $t_i$ appearing in~\eqref{mhv:IG3}. In the MHV case, the polarized scattering equations include,
\begin{gather*}
\tilde\lambda_j = t_j \left( \frac{t_1 \tilde\lambda_1}{z_{i1}} + \frac{t_2 \tilde\lambda_2}{z_{i2}} \right),
\end{gather*}
for $j$ in $3,\dots,n$. These equations imply that
\begin{gather*}
t_1t_j = \frac{[j2]}{[12]} z_{j1},
\end{gather*}
and that, for any given $j$ in $3,\dots,n$,
\begin{gather*}
\frac{t_1}{t_2} = - \frac{[j2]}{[j1]} \frac{z_{j1}}{z_{j2}}.
\end{gather*}
Making these substitutions, it follows that, on the support of the polarized scattering equations,
\begin{gather*}
I_G = - \frac{\avg{12}}{z_{12}} \left( \frac{[b2]}{[b1]} \frac{z_{b1}}{z_{b2}}\right) \left(\frac{[12]}{[b2]} \frac{1}{z_{b1}}\right)^{2} \left( \prod_{\substack{j=3\\ j\neq b}}^n \frac{[12]}{[j2]} \frac{1}{z_{j1}} \right) \left(\prod_{\substack{i\rightarrow j \\ \text{in $G$}}} \frac{[i2][ij]}{[12]} \frac{z_{i1}}{z_{ij}} \right).
\end{gather*}
Both of the products appearing in this equation have $n-3$ terms, and the factors of~$[12]$ in them cancel out. Combining the remaining factors gives
\begin{gather}\label{eq:newIGG}
I_G = \frac{\avg{12}[12]^2}{[b1][b2]} \frac{1}{z_{12}z_{2b}z_{b1}} \prod_{\substack{i\rightarrow j \\ \text{in $G$}}} \left( \frac{[i2][ij]}{[j2]} \frac{z_{i1}}{z_{j1} z_{ij}} \right).
\end{gather}

Proposition \ref{prop:tree} can be used to reexpress the product in \eqref{eq:newIGG} in terms of Parke--Taylor functions. A variation on the argument in Proposition \ref{prop:tree} gives\footnote{This follows by telescoping the factors of $z_{i1}/z_{j1}$.}
\begin{gather*}
\prod_{\substack{i\rightarrow j \\ \text{in $G$}}} \frac{z_{i1}}{z_{j1} z_{ij}} = \sum_{\substack{\sigma \\ x_i < i}} \pt(b\sigma) \frac{z_{b1}}{z_{\sigma^*1}},
\end{gather*}
where $\sigma^*$ is the last entry of $\sigma$. The resulting expression for $I_G$ is
\begin{gather*}
I_G = \frac{\avg{12}[12]^2}{[b1][b2]}\left( \prod_{\substack{i\rightarrow j \\ \text{in $G$}}} \, \frac{[i2][ij]}{[j2]} \right) \sum_{\substack{\sigma \\ x_i < i}} \PT(12b\sigma).
\end{gather*}
Or, reordering the summations, it follows that, on the support of the polarized scattering equations,
\begin{gather*}
\sum_G I_G = \sum_{\sigma \in \fS_{n-3}} \PT(12b\sigma) N(12b\sigma),
\end{gather*}
where
\begin{gather}\label{mhv:NNN}
N(12b\sigma) = \frac{\avg{12}[12]^2}{[b1][b2]} \prod_{\substack{j=3 \\ i\neq b}}^n \sum_{i<_\sigma j} \frac{[i2][ij]}{[j2]}.
\end{gather}
This gives an expansion of the gravity tree amplitude into a sum of YM partial amplitudes,
\begin{gather*}
M_\text{GR} = \sum_{\sigma \in \fS_{n-3}} A_\text{YM}(12b\sigma) N(12b\sigma).
\end{gather*}
For example, when combined with the Parke--Taylor formula for the MHV Yang--Mills amplitude, $A_{\text{YM}}(12b\sigma)$, this yields, via~\eqref{eq:specym} and~\eqref{eq:specGR}, the MHV gravity amplitude
\begin{gather*}
M_{\text{GR}}(1234) = \frac{\avg{12}[12]^8}{[34]}\left( \prod_{\substack{i<j \\ i=1}}^4 [ij] \right)^{-1}.
\end{gather*}

\begin{Remark}\label{rmk:bgk}
Formulas for the MHV gravity amplitude were first obtained by Berends--Giele--Kuijf (BGK), and were inspired by the KLT relations~\cite{1988BGK}. Variations on the BGK formula have appeared in several places, including \cite{2005BBST,2008EF,2009MS}. However, the formula above appears to be a new variant, and it has been derived here using new methods motivated by Section~\ref{sec:3}. The studies cited above mostly use inductive arguments based on recursion relations.
\end{Remark}

\section{Discussion}\label{sec:5}
The formulas for partial NLSM tree amplitudes in Section \ref{sec:4} can be expressed in the form
\begin{gather}\label{eq:generalbcj}
A(a,n) = \sum_{\text{trees}~\alpha} \frac{(a,\alpha)n_\alpha}{s_\alpha},
\end{gather}
for numerators $n_\alpha$ that satisfy two key properties. First, the replacement $\alpha \mapsto n_\alpha$ extends to define a homomorphism out of the space of Lie polynomials, $\cL_A$. Second, the $n_\alpha$ are \emph{polynomial} in the Mandelstam variables~$s_I$. The numerators $n_\alpha$, satisfying these two properties, are called `BCJ numerators', after~\cite{2008BCJ}. The results for MHV gravity also lead to formulas of the form~\eqref{eq:generalbcj} for YM MHV amplitudes, but the $n_\alpha$ have spurious poles coming from the denominator factors in equation~\eqref{mhv:NNN}. Beyond giving formulas for these numerators, there are two important unresolved questions about the numerators for further research.

First, the BCJ numerators in Section~\ref{sec:4} were of the form
\begin{gather*}
n_\alpha = \sum_a (1a,\alpha) n(1a),
\end{gather*}
for some functions $n(1a)$. A number of authors have asked whether there exists a `kinematic' Lie algebra such that the $n_\alpha$ can be expressed instead as a Lie bracketing of $n-1$ Lie algebra elements, by analogy with the definition of~$c_\alpha$,~\eqref{eq:colourfactordef}, \cite{2019MB,2021CJTW,2017CS,2014MC}. The formulas obtained using the methods in this paper may suggest further clues for identifying such `kinematic algebras' for gauge theories like Yang--Mills and NLSM.

\looseness=-1 Second, as has been widely observed, if the partial amplitudes of a gauge theory can be written in the form~\eqref{eq:generalbcj}, for BCJ numerators $n_\alpha$, then that gauge theory can participate in a~KLT relation to produce a gravity amplitude (for some gravity-like theory). It would therefore be desirable to characterise or classify all gauge theories whose tree amplitudes can be obtained using BCJ numerators. A full answer to this question should consider the space of all possible perturbative gauge theories, which is beyond the immediate scope of the methods used in this paper.

These questions are all concerned with the KLT relations satisfied by \emph{tree level} amplitudes. An important further aim is to discover whether the KLT relation,~\eqref{eq:kltrelation}, can be extended to a~statement about higher order terms in the perturbation series. There have been some attempts to formulate a KLT relation amoung 1-loop amplitudes, both in string theory and gauge theory, and the Lie bracket $\{~,~\}$ studied in Section~\ref{sec:2} plays some role here (discussed in~\cite{2020MS}). The key, however, to understanding the tree-level KLT relation is the properties of the colour factors and the partial amplitude decomposition, which at tree level is easily understood as arising from Lie polynomials. At higher orders in perturbation series, the partial amplitude decomposition is related to the topology of surfaces. How this arises is reviewed in Appendix~\ref{app:A}. To formulate KLT relations at higher loop order, it will be useful to understand the algebraic properties of the colour factors and partial amplitude decomposition. By analogy with the tree level case, it would be very difficult to arrive at the tree level KLT relations without understanding the Kleiss--Kuijf,~\eqref{eq:kksha}, and `fundamental BCJ' relations,~\eqref{eq:funbcj}, among partial amplitudes. Finding the analogs of these relations at higher orders in perturbation theory would therefore be a good starting point for further work on this topic.

\appendix
\section{Colour factors and the partial amplitude decomposition}\label{app:A}
Fix a Feynman diagram, $D$, for some ${\rm SU}(N)$ gauge theory. Let $D$ has $k$ internal vertices and $n$ labelled external lines and, for simplicity, suppose $D$ has only cubic vertices. Let $F_1,\dots,F_n \in \operatorname{ad}(\fsu(N))$ be the colour states (in the adjoint representation) associated to each external line. The contribution of $D$ to the amplitude factors as
\begin{gather*}
A_D = (ig)^k C_D I_D,
\end{gather*}
where $C_D$ is some invariant function of the $F_i$ and $I_D$ is the Feynman integral associated to the graph. As a tensor diagram, the commutator of two $F_i$'s can be written as
\begin{equation*}
\begin{tikzpicture}
\draw node at (-1,0.75){$[F_1,F_2] =$};
\draw (0,0) to [out=90, in=-90] (0.9,1.5);
\draw (0.2,0) to [out=90, in=180] (1,0.7) to [out=0,in=90] (1.8,0);
\draw (2,0) to [out=90, in=-90] (1.1,1.5);
\draw node at (0.1,-0.3) {$F_1$};
\draw node at (1.9,-0.3) {$F_2$};
\draw node at (2.5,0.75){$-$};
\draw (3,0) to [out=90, in=-90] (3.9,1.5);
\draw (3.2,0) to [out=90, in=180] (4,0.7) to [out=0,in=90] (4.8,0);
\draw (5,0) to [out=90, in=-90] (4.1,1.5);
\draw node at (3.1,-0.3) {$F_2$};
\draw node at (4.9,-0.3) {$F_1$};
\end{tikzpicture}
\end{equation*}
It follows that $C_D$ can be expanded as a sum of $2^{k-1}$ terms, corresponding to $2^{k-1}$ choices of cyclic orientation to assign to each vertex of $D$. Each term corresponds to a cubic ribbon graph,~$G$, that retracts onto~$D$. Write $c_G$ for the contraction of the $F_i$ according to the ribbon graph~$G$, regarded as a tensor diagram. Then
\begin{gather*}
C_D = \sum_{\substack{G \text{ s.t.}\\ \Thin_G = D} } (-1)^{|G|} c_G,
\end{gather*}
where $|G|$ is the number of white vertices of $G$, with the vertex connected to $n$ fixed to be black, say. In fact, $c_G$ depends only on $G$ regarded as a topological surface (forgetting the graph). Each closed boundary component of $G$ is a trace. A boundary component with marked points $1,\dots,k$ arranged cyclically, gives a contribution
\begin{gather*}
\tr \big(F^1F^2\dots F^k\big).
\end{gather*}
Whereas a boundary component of $G$ with no marked points gives a trace of the identity, which is $\tr (\text{Id}) = N$. Since $c_G$ does not depend on the graph structure of $G$, write $c_\Sigma$ for the colour factor associated to a marked surface with boundary, $\Sigma$. Collecting terms, it is possible to write the whole perturbation series as a sum over surfaces (just as happens for the open string amplitude),
\begin{gather*}
A(1,\dots,n) = \sum A_\Sigma c_\Sigma,
\end{gather*}
where the surfaces $\Sigma$ have boundary marked points labelled by $1,\dots,n$. Likewise, there is a~partial amplitude series for biadjoint scalar theory,
\begin{gather*}
A_{\phi^3}(1,\dots,n) = \sum_D A_D = \sum_{\Sigma, \Sigma'} c_\Sigma \tilde c_{\Sigma'} A(\Sigma,\Sigma').
\end{gather*}
The double partial amplitudes $A(\Sigma,\Sigma')$ are the analog, at higher orders in perturbation theory, of the matrix $m(1a,n|1b,n) = s_{1a}(1a,T(1b))$ whose inverse (away from $s_{1a}=0$) produced the field theory KLT kernel in Section~\ref{sec:2}. It is reasonable to expect that there exists a KLT relation for the partial amplitudes $A(\Sigma)$ at fixed loop order, involving genus $g$ surfaces, with $h$ boundaries and $p$ punctures, subject to the Euler characteristic constraint,
\begin{gather*}
p+2g+h = \ell+1.
\end{gather*}

\subsection*{Acknowledgements}
This article is based on a talk given at the 2020 `Algebraic Structures in Perturbative Quantum Field Theory' conference, in honor of Dirk Kreimer for his 60th birthday. The author thanks Lionel Mason and the referees for their crucial observations and helpful comments. The work is supported by ERC grant GALOP ID: 724638.

\pdfbookmark[1]{References}{ref}
\LastPageEnding

\end{document}